\newtheorem{theo}{Theorem}
\newtheorem{prop}{Proposition}
\newtheorem{lemma}{Lemma}
\theoremstyle{definition}
\newtheorem{defi}{Definition}
\theoremstyle{remark}
\newtheorem{rem}{Remark}
\newtheorem{exa}{Example}
\def\Rb{{\mathbf R}}
\def\Xb{{\mathbf X}}
\def\Yb{{\mathbf Y}}
\def\pib{{\boldsymbol\pi}}
\def\Ar{{\mathbb A}}
\def\Er{{\mathbb E}}
\def\Pr{{\mathbb P}}
\def\Qr{{\mathbb Q}}
\def\Rr{{\mathbb R}}
\def\Ac{{\mathcal{A}}}
\def\Cc{{\mathcal{C}}}
\def\Fc{{\mathcal{F}}}
\def\Nc{{\mathcal{N}}}
\def\Pc{{\mathcal{P}}}
\def\Sc{{\mathcal{S}}}
\def\Xc{{\mathcal{X}}}
\def\ba{{\rm \bf ba}}
\def\one{{\rm \bf 1}}
\def\({\left(}     
\def\){\right)}    
\def\[{\left[}     
\def\]{\right]}
\def\as{{\frenchspacing a.s.}~}
\numberwithin{equation}{section}
\begin{document}
\title{Fairness principles for insurance contracts in the presence of default risk} 
\author{Delia Coculescu and Freddy Delbaen}
\address{Institut f\"ur Banking und Finance, Universit\"at Z\"urich, Plattenstrasse
   14, 8032 Z\"{u}rich, Switzerland}
\address{Departement f\"ur Mathematik, ETH Z\"urich, R\"{a}mistrasse
   101, 8092 Z\"{u}rich, Switzerland}
   \address{ 
   Institut f\"ur Mathematik,
 Universit\"at Z\"urich, Winterthurerstrasse 190,
 8057 Z\"urich, Switzerland}
\date{\today}

\begin{abstract}
We use the theory of cooperative games for the design of fair insurance contracts. An insurance contract needs to specify the premium to be paid and a possible participation in the benefit (or surplus) of the company. It results from the analysis that when a contract is exposed to the default risk of the insurance company, ex-ante equilibrium considerations require a certain participation in the benefit of the company to be specified in the contracts.  The fair benefit participation of agents appears as an outcome of a game involving  the residual risks induced by the default possibility and using fuzzy coalitions. 
\end{abstract} 
\maketitle
\section{Introduction}

In insurance theory in contrast to financial theory, we cannot use the equilibrium considerations of the arbitrage pricing theory for valuation purposes. Such a use would require  that agents are able to sell short risk exposures or to hold risky positions of their choice. Instead,  in insurance, the principle of insurable interest holds and this imposes that agents  entering  insurance contracts have an exposure to the insured  risks. The only available actions by agents remain decisions on accepting or not accepting the proposed insurance contracts. In this situation, cooperative game theory is the right  tool, as it provides equilibrium concepts that can be used for valuation purposes. This paper is intended to exploit this idea by characterising fair contracts in presence of default risk. It is worth noting that similarly to the arbitrage theory, we will  avoid dealing with utility functions of economic agents and only focus on the relation between the premium paid and the payoffs of the contracts.

The common approach in insurance is to use so-called premium calculation principles, having plausible properties from an economic standpoint. One such largely agreed upon property is convexity, which ensures that diversification of risks, which is desirable from an economic pointy of view, is accounted for in the premium calculation. Fair allocation of total premia to individual contracts using coherent risk measures and cooperative game theory was already employed in the literature (see \cite{Pisa}); fair allocations being defined as allocations in the core of a cooperative game. We shall use this type of framework as a starting point.

But the approaches so far completely left out  the default risk of the insurance company from the analysis.  Indeed, in presence of default risk, the insured agents may remain exposed to a residual risk even after signing an insurance contract, as they might not fully receive the promised indemnities.  
Furthermore, the insured agents are not similarly exposed to the default risk of the insurance company, their loss in default being determined by the dependence between individual risk and total risk. Thus, the question of the fairness of the insurance contracts comes in naturally  whenever the insurance company may default. 

We choose to treat the problem of default risk as a time 1  allocation, or allocation of contracts' random payments, as opposed to the premia allocation, which is a time 0 allocation, or cost allocation. This time 1 allocation consists in characterising  the payoffs that are to be paid at time 1 for each contract, depending on the realisation of the risks and whether the company defaults or not.   In our approach, agents can form coalitions and compare costs and residual risks of the corresponding defaultable contracts. The residual risk of one agent measures the impact of the default risk of the insurance firm on the corresponding contract.

Problems related to allocations of random payoffs  have a long history in insurance, for instance in the study of the optimal reinsurance or optimal risk transfer. Already Borch \cite{Borch62} mentions cooperative game theory as a mean of selecting among Pareto equilibria in a reinsurance problem.  More recently and using coherent functions (either utilities or risk measures) there is a vast literature on optimal risk transfer that treats also time 1 allocations. We refer to Heath and Ku  \cite{HeaKu04},  Barrieu and El Karoui \cite{BarrElK05}, \cite{BarrElK05a}, Jouini et al. \cite{JouSchTou08},  Filipovi\'c and Kupper \cite{FilKup08}, Burgert and R\"uschendorf \cite{BurRus06}. Conceptually, the approach in the current paper is distinct from the literature on optimal risk transfers using coherent functions. Main differences are underlined in the title. First, we are interested in fair payoffs of the contracts, while the literature on risk transfers aims at characterising Pareto optima, given specified utilities for agents. 
We will introduce fairness at the same time as Pareto optimality. Instead of defining utility functions for the agents, we use a predefined, coherent price system and the desire of agents of paying the least of costs. Secondly, and more fundamentally,  in the current paper the main question is the default risk of the insurance contracts.  In the above mentioned literature, agents hold risky positions and at time 1 in equilibrium, some agents pay other agents (hence the name of risk transfer). This is not a classical situation in the context where agents are the insured: in general, insured agents do not make  ex-post monetary transfers to compensate for losses that other insured agents incurred. Consequently,  we propose that all payments at time 1 are made from and within the limits of the existing capital of the insurance firm. As claims in default are determined by bankruptcy procedures, we introduce the class of admissible payments that fulfil bankruptcy priority rules.

The remaining of this paper is organised as follows: in Section \ref{sec:setup} we introduce the precise mathematical setting of premium calculation principles that we are using, which are commonotonic submodular functions. Section \ref{sec:ecmodel} presents the insurance model the economic problem to solve. Section \ref{secFairness} formultates the fairness principles in a game theoretic framework. Solving the time 1 allocation requires additional theoretical development. This is introduced in Section \ref{SecAD} where we propose the concept of  state  coalitions  as a mean of reducing the set of solutions. This allows a more precise characterisation of fair payoffs and the proofs of the main results follow immediately. 

\section{Setup and notation}\label{sec:setup}
We work in a simple model consisting of two dates: time 0, where everything is known, and a fixed future date, time 1, where randomness is involved. For the purpose of representing the possible outcomes at time 1, a  probability space $(\Omega, \Fc, \Pr)$ is fixed.  Unless otherwise specified, all equalities and inequalities involving random variables are to be considered in an $\Pr$ a.s. sense.
 The space of  losses (or claims) occurring time 1 is considered to be $L^\infty(\Omega, \Fc, \Pr)$, simply denoted $L^\infty$, i.e.,  the collection of all essentially bounded random variables.

At time 0, the (manager of the) insurance company needs to evaluate the liabilities at time 1. The valuation method chosen will have an impact on the pricing of the individual insurance contracts as it will be explained later on.  We shall assume that the total liability in the company's balance sheet is evaluated by means of a convex functional $\Pc:L^\infty\to \Rr$ fulfilling the properties detailed below.
 \begin{defi}\label{defP} A mapping $\Pc\colon L^\infty\rightarrow \Rr$ is called a convex valuation function if the following properties hold
 \begin{enumerate}
 \item if $0\le \xi\in L^\infty$ then $\Pc(\xi)\ge 0$.
 \item $\Pc$ is convex i.e. for all $\xi,\eta\in L^\infty$, $0\le \lambda\le 1$ we have $\Pc(\lambda \xi +(1-\lambda)\eta)\le \lambda \Pc(\xi) + (1-\lambda) \Pc(\eta)$,
 \item for $a\in \Rr$ and $\xi\in L^\infty$, $\Pc(\xi +a)=\Pc(\xi) + a$
 \item if $\xi_n\uparrow\xi$ (with $\xi_n\in L^\infty$) then $\Pc(\xi_n)\rightarrow \Pc(\xi)$.
 \end{enumerate}
 If moreover for all $0\le \lambda\in \Rr$, $\Pc(\lambda \xi)=\lambda \Pc(\xi)$, we call $\Pc$ coherent.
 \end{defi}
 
 The number $\Pc(\xi)$ may be interpreted as a risk adjusted valuation of the future uncertain position $\xi$. More specifically,  we shall consider $\Pc$ to be a premium principle that the insurance company uses to determine the total premia to be collected from the insured, when the insurance company faces liability $\xi$ at time 1. This reflects the view  that insurance premia should be dependent on the whole portfolio of insurance contracts (see Deprez and Gerber \cite{DG}).
 
  Property (1) in the definition is therefore clear: liability tomorrow brings cash today for the insurance firm; or, conversely, premium collected time 0 corresponds to  liability in insurer's balance sheet. The convexity property is a translation of the diversification benefit. Combinations of risks are less risky than individual positions and, in presence of high competition on the insurance market, it is reasonable to believe that the benefits from the diversification will be passed along to the insured, at least partially, through lower premia.  This explains property (2).  Property (3) means that risk adjusted valuations are measured in money units.  Of course money time 0 is different from money at the end of the period. Introducing a deflator or discounting -- as is the practice in actuarial business since hundreds of years -- solves this problem.  It complicates notation and as long as there is only one currency involved it does not lead to confusion if one supposes that this discounting is already incorporated in the variables. The fourth property is a continuity property.  Using monotonicity (a consequence of the previous properties, see \cite{FDbook}), we can also require that $\Pc(\xi_n)\uparrow \Pc(\xi)$.  
 The homogeneity property is a strong property.  
 
 \begin{rem}
 A coherent valuation function $\Pc$ is a submodular function. If we put $\rho(\xi)=\Pc(-\xi)$ we obtain a coherent risk measure (having the Fatou property); alternatively $u(\xi)=-\Pc(-\xi)$ defines a coherent utility function (see \cite{Del} for more details). Submodular functions that are commonotonic are commonly used in insurance as nonlinear premium principles. We will come back to this  below, once we define commonotonicity.
 \end{rem} 
 
 We say that a random variable $\xi$ is acceptable if $\Pc(\xi)\le 0$.  Remark that $\xi-\Pc(\xi)$ is always acceptable.  If $\Pc$ is coherent, then the acceptability set 
$$
\Ac:=\{\xi\mid \Pc(\xi)\le 0\}
$$ 
is a convex cone.  In this paper, the term ``acceptability" refers to the insurance company and its balance sheet, and not to agents' preferences. Acceptable risks have positive value for the insurance company to hold, hence it would accept holding these risks without requiring a premium payment in exchange.  

The continuity assumption allows to apply convex duality theory and leads to the following representation theorem
\begin{theo} If $\Pc$ is coherent, there exists a convex closed set $\Sc\subset L^1$ (with $L^1$ being the space of   all equivalence classes of integrable random variables on $(\Omega, \Fc, \Pr)$), consisting of probability measures, absolutely continuous with respect to $\Pr$, such that for all $\xi \in L^\infty$:
\begin{equation}\label{reprPS} 
\Pc(\xi)=\sup_{\Qr\in\Sc} \Er_\Qr[\xi].
\end{equation}
Conversely each such a set $\Sc$ defines a coherent valuation function.
\end{theo}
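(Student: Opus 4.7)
My plan is to obtain \eqref{reprPS} as a Fenchel--Moreau biduality in the pairing $\langle \xi, Z \rangle = \Er[\xi Z]$ between $L^\infty$ and $L^1$. First I would introduce the Fenchel conjugate
$$
\Pc^{\ast}(Z) := \sup_{\xi \in L^\infty}\bigl(\Er[\xi Z] - \Pc(\xi)\bigr),\qquad Z \in L^1,
$$
and note that positive homogeneity forces $\Pc^{\ast}$ to take only the values $0$ and $+\infty$: rescaling any $\xi$ at which the bracket is strictly positive sends the supremum to infinity, while $\xi = 0$ always yields $0$. The candidate set is therefore the effective domain
$$
\Sc := \{Z \in L^1 : \Pc^{\ast}(Z) = 0\},
$$
which is convex and, as the zero-sublevel set of a convex weakly lower semicontinuous function, norm-closed in $L^1$.

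Next I would identify $\Sc$ with densities of probability measures $\Qr \ll \Pr$. Testing the defining inequality $\Er[\xi Z] \le \Pc(\xi)$ against constants $\xi \equiv c$ and using (3) gives $c(\Er[Z] - 1) \le 0$ for every $c \in \Rr$, whence $\Er[Z] = 1$. Testing against $\xi = -\one_{\{Z < 0\}}$ and using that $\Pc$ is monotone -- a consequence of (1)--(3) recalled in the remark and in \cite{FDbook} -- forces $\Er[Z\,\one_{\{Z < 0\}}] \ge 0$, so $Z \ge 0$ almost surely.

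The heart of the argument is the biduality identity $\Pc = \Pc^{\ast\ast}$, which by Fenchel--Moreau requires $\Pc$ to be lower semicontinuous in the topology $\sigma(L^\infty, L^1)$. This is the genuine obstacle: the natural topological dual of $L^\infty$ is $ba(\Omega,\Fc,\Pr)$ rather than $L^1$, and the substantive content of the theorem is exactly the passage from $ba$ to its $\sigma$-additive part. The bridge is assumption (4): combined with monotonicity it yields the Fatou property -- truncate via $\eta_n := \inf_{m \ge n} \xi_m$ and combine (4) applied to $\eta_n \uparrow \xi$ with the inequality $\Pc(\eta_n) \le \Pc(\xi_n)$ -- and the Krein--Smulian theorem then upgrades this to $\sigma(L^\infty, L^1)$-closedness of the acceptance set $\Ac$, hence of every sublevel set of the sublinear $\Pc$. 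I would invoke this step in the form of Delbaen's representation theorem in \cite{FDbook}. The converse direction is routine: for any convex closed $\Sc \subset L^1$ of probability densities, setting $\Pc(\xi) := \sup_{\Qr \in \Sc}\Er_\Qr[\xi]$ yields (1)--(3) and positive homogeneity by inheritance from the individual expectations, while (4) follows by swapping the monotone limit with the supremum.
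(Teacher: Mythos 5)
Your argument is correct and is precisely the convex-duality route the paper alludes to (it states the theorem without proof, deferring to \cite{FDbook} and \cite{Del}): Fenchel conjugation in the pairing $(L^\infty,L^1)$, identification of the effective domain with probability densities via monotonicity and translation invariance, and the Fatou-property/Krein--Smulian step to secure $\sigma(L^\infty,L^1)$-lower semicontinuity, which is indeed the only nontrivial point. Nothing is missing, and your explicit reduction of property (4) to the Fatou property via $\eta_n=\inf_{m\ge n}\xi_m$ is exactly the standard bridge.
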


We will consider such a set  $\Sc$ as given and fixed through the analysis, and we will refer to it as the ``scenario set'' generating $\Pc$. An assumption that we will make in this text is that $\Pc$ is coherent with the set $\Sc$ being weakly compact. The weak compactness property of $\Sc$  is equivalent to a  more stronger continuity property of $\Pc$ than in Definition \ref{defP} (4), where $\uparrow$'s are replaced by $\downarrow$'s.  The weak compactness ensures that given $\xi\in L^\infty$ one can find a probability measure  $\Qr^\xi\in \Sc$ such that 
$$
\Pc(\xi)=\Er_{\Qr^\xi} [\xi].
$$  Indeed,  the weak compactness of $\Sc$ is also equivalent to  the weak subgradient of $\Pc$ at $\xi$, $\nabla \Pc(\xi)$,  being non empty for all $\xi\in L^\infty$.  We refer to \cite{FDbook} for the precise statements and proofs of equivalent formulations for $\Sc$ being weakly compact.

In this text we will also make the assumption  that $\Pc$ is commonotonic.
\begin{defi}  We say that two random variables $\xi,\eta$ are commonotonic if there exist a random variable $\zeta$ as well as two non-decreasing functions $f,g\colon \Rr\rightarrow\Rr$ such that $\xi=f(\zeta)$ and $\eta=g(\zeta)$.
\end{defi}
It is easily seen that two random variables $\xi,\eta$ such that $\xi\le 0,\eta\ge 0$ and $\{\xi<0\}\cap\{\eta>0\}=\emptyset$, are always commonotonic.
\begin{defi} We say that $\Pc\colon L^\infty\rightarrow\Rr$ is commonotonic if for each couple $\xi,\eta$ of commonotonic random variables we have $\Pc(\xi+\eta)=\Pc(\xi)+\Pc(\eta)$.
\end{defi}
\begin{rem} 
 Commonotonic convex monetary valuation functions are positively homogeneous and hence coherent. 
  \end{rem}
  
Addition of commonotonic risks is the opposite of diversification. Indeed, $\xi$ and $\eta$ being nondecreasing functions of $\zeta$, neither of them is a hedge against the other. The commonotonicity of $\Pc$ can therefore be seen as a translation of the rule:  if there is no diversification, there is also no gain when putting these claims together. Including commotonicity as an economic principle is considered to be desirable for the purpose of premium calculation or risk measurement in insurance  and many actuarial models are built on the assumption that the premium principles are commonotonic. For instance, the popular class of concave distortion risk measures are commonotonic risk measures, hence they can be seen as particular examples fitting in the framework we develop below. 

With $\Pc$ commonotonic, one can prove the  following result (this will be useful later on):
\begin{lemma}\label{lemsim1}
Let $\xi \in L^\infty$ and $m \in \Rb$. Then,  the following hold:
\begin{align}\label{qstarZplus}
\nabla \Pc(\xi )\subset &\nabla \Pc((\xi-m)^+)\\\label{qstarminus}
\nabla \Pc(\xi )\subset &\nabla \Pc(-(\xi-m)^-)\\
\nabla \Pc(\xi )\subset& \nabla \Pc(\xi \wedge m).
\end{align}

\end{lemma}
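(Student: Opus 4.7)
The plan is to exploit the comonotonic additivity of $\Pc$ together with the characterization of the subdifferential as the set of measures attaining the supremum in the dual representation $\Pc(\xi) = \sup_{\Qr \in \Sc} \Er_\Qr[\xi]$. Under the weak compactness hypothesis, $\Qr \in \nabla \Pc(\xi)$ is equivalent to $\Er_\Qr[\xi] = \Pc(\xi)$.

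First, I would record the following comonotonic decompositions: the pair $(\xi-m)^+$ and $-(\xi-m)^-$ satisfies the sign-and-disjoint-support condition noted just before the definition of comonotonicity, hence is comonotonic; similarly the pair $\xi\wedge m$ and $(\xi-m)^+$ are both non-decreasing functions of $\xi$, hence comonotonic. Using translation invariance in (3) and comonotonic additivity I would then derive
\begin{align*}
\Pc(\xi) &= \Pc\bigl((\xi-m)^+\bigr) + \Pc\bigl(-(\xi-m)^-\bigr) + m,\\
\Pc(\xi) &= \Pc(\xi\wedge m) + \Pc\bigl((\xi-m)^+\bigr).
\end{align*}

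The main step is the following standard ``sharing'' argument. Suppose $\Qr \in \nabla \Pc(\xi)$. For the first identity, by linearity of expectation,
\[
\Pc(\xi) = \Er_\Qr[\xi] = \Er_\Qr\bigl[(\xi-m)^+\bigr] + \Er_\Qr\bigl[-(\xi-m)^-\bigr] + m.
\]
Each of the two expectations is bounded above by the corresponding $\Pc$-value, and the two upper bounds sum (after adding $m$) to $\Pc(\xi)$. Hence both bounds must be tight, which gives
\[
\Er_\Qr\bigl[(\xi-m)^+\bigr] = \Pc\bigl((\xi-m)^+\bigr), \qquad \Er_\Qr\bigl[-(\xi-m)^-\bigr] = \Pc\bigl(-(\xi-m)^-\bigr),
\]
i.e., $\Qr$ lies in $\nabla \Pc((\xi-m)^+)$ and in $\nabla \Pc(-(\xi-m)^-)$. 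The same argument applied to the decomposition $\xi = \xi\wedge m + (\xi-m)^+$ gives $\Qr \in \nabla \Pc(\xi\wedge m)$, establishing the third inclusion.

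There is no real obstacle beyond verifying the two comonotonic decompositions; everything else is a direct application of comonotonic additivity and the elementary fact that when a sum of bounded quantities attains the sum of the bounds, each individual bound is attained. No homogeneity or positivity of $\xi$ is needed, since the comonotonic additivity alone drives the argument.
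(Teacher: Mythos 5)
Your proof is correct and follows essentially the same route as the paper: decompose $\xi-m$ (resp.\ $\xi$) into comonotonic pieces, use comonotonic additivity of $\Pc$ and linearity of $\Er_\Qr$, and conclude that since the individual upper bounds $\Er_\Qr[\cdot]\le\Pc(\cdot)$ sum to an equality, each must be tight. The only cosmetic difference is that you handle $\xi\wedge m$ by a fresh application of the same sharing argument, whereas the paper deduces it from the already-established equality for $(\xi-m)^+$; both are fine.
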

\begin{proof} 
Using commonotonicity of $\Pc$ we find 
\begin{equation}\label{intm}
\Pc\(\xi-m\)=\Pc\((\xi-m)^+\)+\Pc\(-(\xi-m)^-\)
\end{equation} 
 Any  $\Qr^*\in \nabla \Pc(\xi)$ satisfies:
\begin{align*}
\Pc(\xi-m)&=\sup_{\Qr\in\Sc}\Er_{\Qr}[\xi-m]=  \Er_{\Qr^*}[\xi-m]\\
\Pc\((\xi-m)^+\)&\geq  \Er_{\Qr^*}\((\xi-m)^+\)\\
\Pc\(-(\xi-m)^-\)&\geq  \Er_{\Qr^*}\(-(\xi-m)^-\)
\end{align*}
 then replacing these expressions  in (\ref{intm}) we find that 
 \begin{align} \Pc\(\xi-m\)&=\Pc\((\xi-m)^+\)+\Pc\(-(\xi-m)^-\)\\ &\geq   \Er_{\Qr^*}[\xi-m]=\Pc(\xi-m),
 \end{align} 
 in other words we only have equalities. This proves that $\Qr^*\in \nabla \Pc((\xi-m)^+)$ and $\Qr^*\in \nabla \Pc(-(\xi-m)^-)$. Similarily, by commonotonicity and the above: $$\Pc(\xi \wedge m)=\Pc(\xi)-\Pc\((\xi-m)^+\)= \Er_{\Qr^*}[\xi\wedge m],$$ proving  $\Qr^*\in \nabla \Pc(\xi\wedge m)$.
\end{proof}

There is a link between commonotonic valuation functions and Choquet integration theory (see Schmeidler \cite{Schm}). For bounded nonnegative risks $\xi$,  the following representation holds when $\Pc$ is a commonotonic valuation function:
\begin{equation}\label{Palter}
\Pc(\xi)=\sup_{\mu\in\Cc(w)}\Er_\mu[\xi]=\int_0^\infty w(\xi>a)da
\end{equation}
where $w:\Fc\to\Rr_+$ satisfies $w(\Omega)=1$ and
\begin{equation}\label{2alter}
w(A\cap B)+w(A\cup B)\leq w(A)+w(B),
\end{equation}
and  
$$
\Cc(w):=\{\mu \text{ finitely additive measure}\;|\; \mu(\Omega)=1,\; \forall A\in\Fc:\;0\le \mu(A)\leq w(A)=\Pc(\one_A) \}.
$$
The use of Choquet integration as premium principle was emphasized by Denneberg, \cite{Denn1}. Denneberg was inspired by the pioneering work of Yaari, \cite{Yaa}.

A set function $w$ satisfying (\ref{2alter}) is called 2-alternating or supermodular and can serve as a characteristic function for cost games (which are the duals of convex games) and $\Cc(w)$ is called the core of the game.  In general $\Sc\subset \Cc(w)$, but when $\Sc$ is weakly compact we have $\Sc =\Cc(w)$ and the following representation of the scenario set $\Sc$ holds:
$$
\Sc=\{\mu \text { probability measure}\;|\; \mu\ll \Pr,\; \mu(A)\leq w(A), \forall A\in\Fc\}.
$$
In Delbaen \cite{D1} and Schmeidler \cite{Schm1} one can find the basics of convex game theory that are going to be used in this paper. 

\bigskip

To sum up, the properties of $\Pc$ that are going to be used in this paper are:\\
\noindent 
\textbf{Assumptions.} The  valuation function $\Pc$ is commonotonic and the scenario set $\Sc$  (cf. the representation in (\ref{reprPS})) is weakly compact in $L^1$.

\section{The economic model}\label{sec:ecmodel}

We consider $N$ economic agents ($N\in\mathbb N^*$) that are the potential buyers of insurance.  Each agent $i$ is endowed with a risky position $X_i\in L^\infty_+$  and is considering at time 0 the option to buy insurance for covering his/her exposure. The insurer will be denoted by the index $0$; his role is to propose insurance contracts to each agent. Also, he provides an initial capital denoted by $k_0$.  We shall  propose some principles for the design of fair individual insurance contracts. The only assumption about the agents preferences is that they prefer more rather than less and they are risk adverse, as they seek to buy insurance. Therefore, when comparing different contracts that are proposed to them, the premium to be paid is an important element in their decision making. They choose the lower premium when the indemnities are the same. We will have to deal with the issue that the default risk of the insurance company is depending on the premia collected, impacting the cash flows of the contracts at time 1. Therefore random cash flows need to be compared as well and not only premia. This will be detailed in a while (Section \ref{secFairness}). 
 We shall assume the insurer's contribution  $k_0$ exogenous to the analysis,  for instance it reflects some exogenous regulatory constraints and shareholder preferences regarding the riskiness of the company's shares. The question of setting a level of the capital $k_0$ is not addressed in this paper.
 
Therefore, we fix $\Xb=(X_0,X_1,...,X_N)$, i.e. an $N+1$ dimensional vector of random variables nonnegative and bounded, where $X_1,...,X_N$ are the risks to be insured, while $X_0=k_0$ by convention, i.e, the equity capital brought by the insurer at time 0. For simplicity, from now on we shall refer to $\Xb$ as the \textit{liability vector}, even though, the first component $X_0$ is strictly speaking not liability.  From now on,  $N+1$ dimensional random vectors will always appear in bold.

The insurance company uses a commonotonic valuation function $\Pc$ for calculating the total premia to be collected.  Given that the total risk exposure is
$$
 S^{\Xb}:=\sum_{i=1}^N X_i,
 $$  the total premium is computed as:
\begin{equation}\label{premiatotal}
k:=\Pc\(S^{\Xb}\)
\end{equation}
so that the \textit{total capital}\footnote{We use this term in the frame of this paper to mean total fund that is available in the insurance firm at time 1; it does not match an accounting definition of total capital. } of the insurance firm is
$$
K:=k+k_0,
$$
and we will suppose that suppose that $\Pr(S^X>k)>0$ so that the total premia is not sufficient to exclude the possibility of default and $K>0$. These assumptions exclude some trivial cases from the analysis. 

Hence, we assume the insurance company charges the insured the minimum amount that makes the aggregate net position acceptable. This reflects the discussion from the previous section, where we assumed that there are benefits from pooling risks and these benefits are passed along to the insured, at least partly. The valuation of the total premia using a convex functional as in (\ref{premiatotal}) is a necessary first step in order to pass along some diversification benefits to the insured. Remains to ensure that each agent individually  ``gets her share of the pie''; here the concept of fairness of the contracts comes into play. 

Let us more precisely describe the components of an insurance contract.

\subsection{Premia, indemnities and benefit sharing}
The insurance company is proposing insurance contracts to the $N$ agents. In particular, for the risks   $X_1,\cdots X_N $ and with an initial capital $X_0=k_0$, the insurance company proposes contracts 
 $$
 ((\pi_1,X_1,B_1),\cdots,(\pi_N, X_N,B_N))\in (\mathbb R_+\times L^\infty_+\times L^\infty_+)^N
 $$ i.e.,  a collection of individual contracts, with each individual contract $(\pi_i,X_i,B_i )$  specifying a premium $\pi_i\geq 0$ to be paid by the agent $i$ at time $0$ and a \textit{promised} payoff of the contract at time 1, consisting of the indemnity $X_i$ and possibly a participation in the benefit  of the company, $B_i\geq 0$ (dividend). One may want to consider  all $B_i=0$, i.e., there is no such benefit sharing proposed in the contracts. In this case, the insurer will receive the whole benefit in the form of a dividend. We shall deal with the issue of the benefit sharing using equilibrium considerations, rather than taking from the outset the point of view that there is or not such a participation proposed within a contract. 
 
The actual payments cannot exceed the total capital of the insurance company, and hence the insurance company defaults when $S^{\Xb}>K$.  The contracts are defaultable  and therefore the promised payment $X_i+B_i$ may differ from the \textit{actual} payment of the contract, denoted $Y_i$. This satisfies
\begin{align*}
Y_i&=X_i+B_i\text{ if the company does not default}\\
Y_i&<X_i \text{ if the company defaults on its contracts.}
\end{align*}
It follows that the actual payment and the benefit participation are linked by the relation
$$
B_i=(Y_i-X_i)^+.
$$
As the insurer brings in an initial capital $k_0$, he is entitled at time $1$ to a dividend  
 $$
 Y_0:=\(K-\sum_{i=1}^NY_i\)^+,
 $$
 that is, the payoff to the insurer is the residual value, once all payoffs to the insured are paid out.

All payments by the insurance company occurring at time 1 will be called \textit{payoffs}; they are indemnities and dividends and they represent the actual monetary transfers from the insurance firm toward the agents, as opposed to the promised payments of the contracts.  We shall restrict our analysis to the class of payoffs that are  feasible, i.e., they do not exceed the total capital available and also respect the legal requirement that indemnities have a higher priority of payment over dividends. These are called admissible payoffs.

 \begin{defi} We denote by $\Xc$ the space of $N+1$ dimensional random variables which are bounded. 
 We consider the liability vector $\Xb=(X_0,...,X_N)\in \Xc$ with all $X_i\geq 0$ and $X_0$  constant and consider $K$ another constant satisfying $X_0\leq K$.
 \begin{itemize} 
 \item[(a)] We call payoff of total mass $K$ any vector $\Yb=(Y_0,\cdots,Y_N)\in \Xc$ that satisfies $\sum_{i=0}^N Y_i= K$ and each $Y_i$ is $\sigma(X_i,S^{\Xb})$ measurable. The class of payoffs of total mass $K$ is denoted by $\Xc(\Xb,K)$. 
 \item[(b)] The class of admissible payoffs of total mass $K$,  corresponding to the liability $\Xb=(X_0,...,X_N)\in \Xc$ is defined as:
 $$
 \Ar^\Xb(K)=
 \left \{ \Yb\in \Xc(\Xb,K) \biggm |  \forall i\geq 1:
 \begin{array}{l}
  Y_i\one_{\{S^{\Xb}>K\}}=\frac{X_i}{S^{\Xb}} K\one_{\{S^{\Xb}>K\}}\\
  Y_i\one_{\{S^{\Xb}\leq K\}}\geq X_i\one_{\{S^{\Xb}\leq K\}}\\
 \end{array}
 \right  \}.
 $$
 \end{itemize}
 \end{defi}
 
Admissible payoffs respect the following rules:  all insurance claims $X_i$, $i\geq 1$ are paid entirely  to the insured whenever there is sufficient capital to do so ($S^{\Xb}\leq K$); default occurs when the total insurance claim exceeds the total capital ($S^{\Xb}>K$);  in case of default no dividend is distributed and the insured have equal priority of their claims. Hence in default all capital $K$  is distributed as indemnities, proportionally to the claim size. The fact that an admissible payoff component $Y_i$  is considered $\sigma(X_i,S^{\Xb})$ measurable ensures that we can specify ex-ante within the contracts the form of the payoffs as functions  of the individual risk and total risk only. This ensures that each agent can observe time 1 their payoff and judge if it respects the contract, without necessarily having knowledge of the losses of the other agents individually.  This assumption reflects the practice in the industry.  From a mathematical point of view it is not a necessary condition though. 
  
\begin{exa} {\it Standard payoffs}. Let us consider that each insured receives some constant proportion of the company's benefit $\(k-S^{\Xb}\)^+$. In this case, these constant proportions can be specified in the contracts at time $0$. The corresponding admissible payoffs  (that we shall call standard payoffs) are given as follows:
 \begin{align}\label{formYi}
 Y_i&=\[X_i+\alpha_i\(k-S^{\Xb}\)\]\one_{\{S^{\Xb}\leq k\}}+X_i\(\frac{K}{S^{\Xb}}\wedge 1\)\one_{\{S^{\Xb}> k\}},\quad i=1,\ldots,N\\\label{formY0}
 Y_0&=\[k_0 +\alpha_0\(k-S^{\Xb}\)\]\one_{\{S^{\Xb}\leq k\}}+\(K-S^{\Xb}\)^+\one_{\{S^{\Xb}> k\}},
\end{align}
where $k=K-X_0=K-k_0$ is the total premium and  each $\alpha_i$ is a nonnegative constant  and  $\sum_{i=0}^N\alpha_i=1$. The standard allocations have the feature that the surplus (or benefit)  $(K-S^{\Xb})^+$ is shared between the agents and the insurer as follows:
\begin{itemize}
\item[-]  when $S^{\Xb}\in[k,K]$, the insurer takes all the surplus, as it does not exceed his initial contribution (the equity capital) $k_0$.
\item[-] when $S^{\Xb}<k$, there is a benefit and all agents and the insurer receive some fixed share of the benefit.
\end{itemize}
For the insured the benefit writes:
$$
B_i=\alpha_i\(k-S^{\Xb}\)^+.
$$
 \end{exa}

 \begin{defi}
 The payoff vector $\Yb$ is called standard if they are of the form (\ref{formYi})-(\ref{formY0}). The corresponding contracts $(\pi_i,Y_i)$ are then called \textit{standard}.
 \end{defi}

\subsection{Problem description}

A liability vector $\Xb\in \Xc$ is fixed, with $X_0=k_0\geq 0$ constant. The design of a contract corresponding to the risk $X_i$ requires specifying two components: $\pi_i$, the premium  and $B_i\geq 0$, i.e., a possible benefit participation. This  is the same as specifying a premium $\pi_i$ and a random payment $Y_i$,  the actual contract payment. We saw above that the  actual payment $Y_i$ and the the benefit participation are linked by:
$$
B_i=(Y_i-X_i)\one_{\{S^{\Xb}<K\}}= (Y_i-X_i)^+.
$$
Therefore, the contracts' design problem reduces to the specification of a cost vector $\pib=(\pi_0,..\pi_N)\in\Rr_+^{N+1}$ satisfying $\pi_0=k_0$ and  $\sum_{i=0}^N\pi_i=K$, and a payoff vector $Y=(Y_0,...,Y_N)\in \Ar^{\Xb}(K)$. 

When are costs $\pib$ and payoffs $\Yb$ fair? We will take the point of view of each stakeholder (insurer or insured) and identify individual and collective rationality conditions for accepting the costs $\pib$ and payoffs $\Yb$  that are proposed to them. The stakeholders' objectives are partly cooperative and partly conflicting. 
On the one hand, by cooperating, i.e., pooling the risks, diversification is achieved and this is beneficial for the insured globally, as it reduces the total cost to be payed by a group. This total cost reduction  on the other hand needs to be allocated among all stakeholders and here objectives are conflicting: each one aims at paying a low cost $\pi_i$, while securing the largest payoff $Y_i$ for themselves . Cooperative game theory is  providing solutions concepts in this context. Two distinct problems need to be considered: 
\begin{itemize}
\item[-] Time 0 allocation, or cost allocation:  allocate the total capital $K$ to contributions of individual agents at time 1. The aim is to determine $\pib=(\pi_i)_{i=0}^N\in\Rr_+^{N+1}$, such that $\sum_{i=0}^N \pi_i =K$. 
For consistency with our initial assumptions, the allocation method will lead to $\pi_0=k_0=X_0$, that is, the shareholder will bring in the equity capital entirely.
\item[-] Time 1 allocation, or payoff allocation:  conditionally  on the realisation of the vector $\Xb$, allocate the total fund $K$ to the stakeholders. The aim is to select a random vector $\Yb\in\Ar^\Xb(K)$  with $Y_i$ being the actual payment to agent $i$.  
\end{itemize}

\section{Cooperative games and the notion of fairness of insurance contracts}\label{secFairness}

In this section we define the fair contracts as outcomes of cooperative games.  
The notion of a fair cost allocation that is based on equilibrium  in a cooperative game,  was introduced already in  Delbaen  \cite{Pisa}, \cite{FDbook}. We go further and also employ convex games for solving the time 1 allocation problem, that is,  the determination of the family of random payoffs $\Yb$.

  In  \cite{Pisa} it was shown that starting from any given risk vector $\Xb=(X_0,...,X_N)$ and given a coherent valuation function $\Pc$,  one can define a  cooperative game as follows.  Let $\Nc:=\{0,...,N\}$ be the set of players, consisting of all agents and the insurer; $2^{\Nc}$ is the set of all possible groups of players, named coalitions. We refer to $\Nc$ as the grand coalition.
We then define a \textit{cost game} $(\Nc,2^{\Nc},c^{\Xb})$, with  $c^{\Xb}$ being the characteristic cost function of the game,  $c^{\Xb}:2^{\Nc}\to\mathbb R$ defined as:
$$
c^{\Xb}(S):=\Pc\(\sum_{i\in S} X_i\),\text{ for all }S\in 2^\Nc.
$$  
In the literature, one often encounters a  value function of the game instead of a cost function. The value function $v^{\Xb}$ is obtained as  $v^{\Xb}(S):=c^{\Xb}(\Nc)-c^{\Xb}(\Nc\setminus S)$; $v^{\Xb}$ expresses value (or profit) and eplayers aim to get the most possible, while $c^{\Xb}$ measures costs that players intend to minimise. As we study cost allocation here, it is more intuitive to use the cost function $c^{\Xb}$, instead of the value function $v^{\Xb}$.  All concepts from profit games can easily be translated to the cost setting.

The set of actions available to a coalition $S$  consists of all possible divisions $(x_i ), i\in S$ of $c^{\Xb}(S)$ among the members of $S$, $\sum_{i\in S} x_i = c^{\Xb}(S)$. We search for an action of the grand coalition that is stable, in the sense that no coalition $S$ can obtain a better outcome. The set of such actions is called the core of the game (see Shapley \cite{Shapley} for more details):
$$
\Cc\(c^{\Xb}\):=\left \{\mathbf x=(x_i,i\in \Nc)\;\Big|\;  \sum_{i\in \Nc} x_i = c^{\Xb}(\Nc) \text{ and for all } S\in 2^\Nc \;  \sum_{i\in S} x_i \leq  c^{\Xb}(S)\right \}
$$
We follow \cite{Pisa} and propose the following definition:
\begin{defi}
An element  $\pib=(\pi_i,i\in \Nc)\in \Cc(c^{\Xb})$ is called  \textit{a fair cost allocation} for the risk vector $\Xb$. 
 \end{defi}
We remark that if $\pi\in\Cc\(c^{\Xb}\)$ we necessarily have that $\pi_0=k_0$.  Indeed $\pi_0\le c^{\Xb}(k_0)=k_0$ and $\sum_{j\ge 1}\pi_j\le \Pc(\sum_{j\ge 1}X_j)=k$.  But $\sum_{j\ge 0}\pi_j=K=k_0+k$ and hence the two inequalities must be equalities.
 
 The characteristic function $c^{\Xb}$ gives for any coalition the value of its liability, using the valuation function of the insurance company $\Pc$. From the standpoint of the agents in a coalition $S$,  $c^{\Xb}(S)$  represents a cost they have to bear for insurance, should they decide to split from the grand coalition.  A fair premia is therefore simply any  repartition of the total capital  $K$ in $N+1$ individual costs representing the contribution of each agent, such that no coalition prefers to split from the grand coalition in order to become a separate entity to be insured. 

Once  the costs allocated,  the next problem is determining the payoffs of the contracts, in particular a fair benefit allocation. This requires to specify a preference order on the space of payoff vectors $\Xc$, i.e.,  the space of bounded, $N+1$ dimensional random variables. As said previously, we do not assume any specific utility function for the agents, but simply that they are preferring to pay less rather than more and are risk adverse. In other words, they have increasing utility functions and care about the cost of insurance. 

The following definition introduces the most natural preference order for each coalition, namely the cheapest to insure payoff is preferred by any group of agents. Given that any  agent $i$  has already an exposure $X_i$, this gives the following:

\begin{defi}\label{def:preference}
Given two payoff vectors $\boldsymbol \xi,\boldsymbol \eta \in \Xc$, we say that the coalition $S\in 2^\Nc$ prefers $\boldsymbol \xi$ to $\boldsymbol \eta$, and we write $\boldsymbol \eta \prec_S\boldsymbol \xi$ if:
$$
\Pc\(\sum_{i\in S}(X_i-\xi_i)\)<\Pc\(\sum_{i\in S}(X_i-\eta_i)\).
$$
or, alternatively:
$$
c^{\Xb-\boldsymbol\xi}(S)<c^{\Xb- \boldsymbol\eta}(S).
$$
The weak preference relation is denoted by  $\preceq_S$, namely we write $\boldsymbol \eta \preceq_S\boldsymbol \xi$ if the coalition $S$ does not prefer $\boldsymbol \eta$ to $\boldsymbol \xi$; indifference is denoted by $\sim_S$.
\end{defi}

\begin{rem}
We notice that with our definition of preferences, a coalition $S$ aggregates value from each of its members and does not care about allocations of costs or payoffs  among its members:   whenever $\boldsymbol  \xi,\boldsymbol \eta \in \Xc$ satisfy $\sum_{i\in S}\xi_i=\sum_{i\in S}\eta_i$, the coalition is indifferent between the two vectors:
$$
\boldsymbol \xi\sim_S\boldsymbol \eta.
$$
This is nothing but the concept of transferable utility, a common assumption in cooperative games.

Below we are going to apply preferences to net payoffs, which are payoffs from insurance minus costs of insurance.  It is therefore important to notice that a coalition $S$ will be indifferent between $\mathbf 0=(0,...,0)$ and any payoff $\boldsymbol \xi$ satisfying $\sum_{i\in S}\xi_i=0$. 
\end{rem}

\begin{rem}
Any risk $\boldsymbol \xi\in\Xc$ satisfying $\sum_{i\in S}(X_i-\xi_i)\in \Ac$ satisfies  $\mathbf 0\preceq_S \boldsymbol \xi$.
\end{rem}

We propose the following
\begin{defi}\label{deffairpayoff}
 Given a fair cost allocation $\pib$, a  \textit{fair payoff} for the risk vector $\Xb$, is a vector of random variables $\Yb\in\Ar^{\Xb}(K)$ satisfying 
 \begin{equation}\label{fairpayoff}
\mathbf 0\preceq_S \Yb-\pib \quad \forall S\in2^\Nc,
 \end{equation}
 with $\mathbf 0=(0,...,0)\in\Xc$.
 \end{defi}
 As a member of the grand coalition, player $i\in \Nc$ pays $\pi_i$ and receives at time 1 the  payoff $Y_i$. Therefore $\Yb-\pib$ represents the vector of net payoffs of players. A coalition $S$ has therefore a net payoff $\sum_{i\in S} (Y_i-\pi_i)$. Alternatively,  a coalition $S$ has a total cost $c^{\Xb}(S)$ to be paid at time 0 and a total payoff of $c^{\Xb}(S)$ that can be distributed to members of the coalition at time 1. The net payoff at the coalition level is therefore 0.  The interpretation of (\ref{fairpayoff}) is therefore the following: fair payoffs  are such that any coalition $S$  prefers $\Yb-\pib$ to a null net cash flow, that is, prefers buying insurance within the grand coalition rather than doing this as a separate entity.  

An alternative --- maybe easier to understand --- expression for  (\ref{fairpayoff}) is the following 
\begin{equation}\label{fairpayoffbis}
c^{\Xb-\Yb}(S)+\sum_{i\in S}\pi_i\leq c^{\Xb}(S)\quad \forall S\in2^\Nc,
\end{equation} where $c^{\Xb-\Yb}$ is the value of the residual claims of the coalition $S$, 
$$
c^{\Xb-\Yb}(S)=\Pc\(\sum_{i\in S} (X_i-Y_i)\).
$$ 

The intuition behind the expression (\ref{fairpayoffbis}) is as follows. As usual, player $i$ is exposed to risk $X_i$; we consider each player aims at achieving  full  coverage of their risk at the lowest possible cost. To achieve full coverage of risks, agents consider also the cost to insure residual risks $\Pc(X_i-Y_i)$, they need to pay additional insurance cost if  this quantity is positive. Every coalition $S$ looks at the proposed cost allocation $\pib$ and payoff allocation $\Yb$ within the grand coalition  and compare with the stand alone situation. 
 
Hence, when the payoff $Y_i$ is distributed to player $i$ as a member of the grand coalition, it leaves agent $i$  with a residual risk $X_i-Y_i$. All players now need to cover their residual risks. They can decide to form coalitions for this purpose. The residual cost of  achieving an acceptable position for  a coalition $S$ is  $ \Pc\(\sum_{i\in S} (X_i-Y_i)\)=c^{\Xb-\Yb}(S)$. Hence, all players may consider entering a second game (the re-insurance game), with characteristic function $c^{\Xb-\Yb}$. 

Alternatively, a coalition $S$, by paying a premia $c^{\Xb}(S)$ achieves directly an acceptable residual value and reinsurance is costless. Indeed,
$$
\sum_{i\in S}X_i -c^{\Xb}(S)\in \Ac.
$$
If the cost allocation and the payoffs of  the initial game $(\Nc,2^{\Nc},c^{\Xb})$ are fair,  any coalition can achieve an acceptable position at a  lower cost as members of the grand coalition and with possibly a reinsurance of the residual risk, rather than forming from the outset a separate group.

It is important to emphasize that in general the residual risk $X_i-Y_i$ of an agent is non zero, because of the default risk of the company. 

For this reason, in addition to fairness, it seems important that payoffs are tailored in a way that achieves Pareto optimality,  in the cost minimisation problem of the residual risks of all agents:
\begin{defi}
A payoff family $\Yb\in\Ar^{\Xb}(K)$ is said to be maximal if it is a solution of the following cost minimisation problem:
\begin{equation*}
\inf_{\boldsymbol \xi \in\Ar^{\Xb}(K)} \sum_{i=0}^N\Pc(X_i-\xi_i).
\end{equation*}
\end{defi}
\begin{lemma}\label{lemmaximal}
Standard payoffs are maximal.
\end{lemma}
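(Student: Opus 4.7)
The strategy is to derive a uniform lower bound on $\sum_{i=0}^N \Pc(X_i-\xi_i)$ over $\xi \in \Ar^{\Xb}(K)$ and then check that the standard payoffs saturate it. First, for each $i \geq 1$ I would decompose $X_i - \xi_i = -U_i + B_i$, where $U_i := (\xi_i - X_i)\one_{\{S^{\Xb}\leq K\}} \geq 0$ and $B_i := (X_i-\xi_i)\one_{\{S^{\Xb}>K\}} = (X_i/S^{\Xb})(S^{\Xb}-K)^+ \geq 0$; admissibility fixes the default-event payoff, so $B_i$ depends only on $\Xb$ and $K$, not on the choice of $\xi$. The two summands have disjoint strict supports, so they are commonotonic (by the remark following the commonotonicity definition) and $\Pc(X_i-\xi_i) = \Pc(-U_i) + \Pc(B_i)$, which isolates the $\xi$-invariant quantity $\sum_{i\geq 1}\Pc(B_i)$.

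The central computation is the pointwise identity
\[
\sum_{i\geq 1}(-U_i) + (X_0-\xi_0) \;=\; (S^{\Xb}-k) \wedge k_0,
\]
which follows from the admissibility constraint $\sum_{i=0}^N \xi_i = K$ by a case split on $\{S^{\Xb}\leq K\}$ and $\{S^{\Xb}>K\}$. Since $(S^{\Xb}-k)\wedge k_0$ and $(S^{\Xb}-K)^+$ are both non-decreasing functions of $S^{\Xb}$ and sum to $S^{\Xb}-k$, commonotonic additivity together with translation invariance implies $\Pc((S^{\Xb}-k)\wedge k_0) = -\Pc((S^{\Xb}-K)^+)$. Applying subadditivity of $\Pc$ to the $N+1$ random variables $\{-U_i\}_{i\geq 1}\cup\{X_0-\xi_0\}$ then yields the desired lower bound
\[
\sum_{i=0}^N \Pc(X_i-\xi_i) \;\geq\; \sum_{i\geq 1}\Pc(B_i) - \Pc((S^{\Xb}-K)^+),
\]
valid for every admissible $\xi$.

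To conclude, I would check that for a standard payoff $\Yb$ this inequality becomes an equality. From the explicit formulas, $-U_i = -\alpha_i(k-S^{\Xb})^+$ for $i \geq 1$ and $X_0 - Y_0 = -\alpha_0(k-S^{\Xb})^+ + (S^{\Xb}-k)^+\wedge k_0$ are all non-decreasing functions of $S^{\Xb}$, hence pairwise commonotonic; iterated commonotonic additivity collapses the subadditivity step into an equality, so the standard payoff attains the infimum. The main subtlety I anticipate is that $X_0-\xi_0$ is not sign-definite for a generic admissible $\xi$, so one cannot decompose it cleanly as was done for the insured; the resolution is to lump $X_0-\xi_0$ together with the $-U_i$'s, after which their aggregate reduces to the $\sigma(S^{\Xb})$-measurable quantity $(S^{\Xb}-k)\wedge k_0$ and the commonotonic machinery handles the rest.
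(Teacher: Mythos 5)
Your proof is correct, and although it shares the paper's overall skeleton (split off the default-event part of $X_i-\xi_i$, which admissibility fixes; obtain a uniform lower bound by subadditivity; show standard payoffs attain it via commonotonic additivity), it genuinely diverges in its treatment of the shareholder coordinate, and the divergence matters. The paper decomposes every coordinate, including $i=0$, into $(X_i-\xi_i)^+$ and $-(X_i-\xi_i)^-$, asserts that $\sum_i\Pc((X_i-\xi_i)^+)$ is the same for all admissible $\boldsymbol\xi$, and then bounds $\sum_i\Pc(-(X_i-\xi_i)^-)$ below by $\Pc(-\sum_i(X_i-\xi_i)^-)=\Pc(-(S^{\Xb}-K)^-)$. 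Both steps are clean only for $i\ge 1$: the quantity $(X_0-\xi_0)^+=(k_0-\xi_0)^+$ is not pinned down by admissibility on $\{S^{\Xb}\le K\}$, and the identity $\sum_i(X_i-\xi_i)^-=(S^{\Xb}-K)^-$ fails in general (for the standard payoff on $\{k<S^{\Xb}<K\}$ the left side is $0$ while the right side is $K-S^{\Xb}>0$), so the paper's lower bound is not manifestly $\boldsymbol\xi$-independent as written. Your choice to keep $X_0-\xi_0$ whole and absorb it into the subadditivity step is precisely what repairs this: the aggregate $\sum_{i\ge1}(-U_i)+(X_0-\xi_0)$ collapses to the $\boldsymbol\xi$-independent, $\sigma(S^{\Xb})$-measurable quantity $(S^{\Xb}-k)\wedge k_0$, whose value $-\Pc\bigl((S^{\Xb}-K)^+\bigr)$ you extract from translation invariance and the normalization $\Pc(S^{\Xb})=k$; and since for standard payoffs all the lumped summands are non-decreasing functions of $S^{\Xb}$, the subadditivity collapses to equality. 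What your route buys is a fully rigorous handling of the $i=0$ term and an explicit value of the infimum, $\sum_{i\ge1}\Pc(B_i)-\Pc\bigl((S^{\Xb}-K)^+\bigr)$, at the modest cost of invoking the premium normalization, which the paper's intended argument does not use.
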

\begin{proof}
For any $\boldsymbol \xi, \boldsymbol \eta\in \Ar^{\Xb}(K)$ we have 
\begin{align}\label{in}
 \sum_{i=0}^N\Pc((X_i-\xi_i)^+) &=\sum_{i=0}^N\Pc((X_i-\eta_i)^+).
\end{align}
Indeed, we have that the acceptability of $\boldsymbol \xi$ implies for all $i\in\Nc$:  $\{X_i-\xi_i>0\}\subset \{S^\Xb>k\}$ and  the payoffs in default are entirely fixed by the admissibility conditions: $\xi_i\one_{\{S^\Xb>k\}}=\eta_i\one_{\{S^\Xb>k\}}$. It follows that $(X_i-\xi_i)^+=(X_i-\eta_i)^+$, $\forall i\in\Nc$, hence (\ref{in}) is verified.

For any random variable $\xi$, $\Pc(X_i-\xi)= \Pc\((X_i-\xi)^+\)+\Pc\(-(X_i-\xi)^-\)$ by commonotonicity of $\Pc$. Using this property and (\ref{in}), it follows that $\Yb$ is maximal  if and only if it is a solution of
\begin{equation}\label{iin}
\inf_{\boldsymbol \xi \in\Ar^{\Xb}(K)} \sum_{i=0}^N\Pc\(-(X_i-\xi_i)^-\).
\end{equation}
As $\Pc$ is convex, for any  $\boldsymbol \xi \in\Ar^{\Xb}(K)$: $ \sum_{i=0}^N\Pc\(-(X_i-\xi_i)^-\)\geq \Pc\(- \sum_{i=0}^N(X_i-\xi_i)^-\)= \Pc\(-(S^\Xb-K)^-\)$. So,  the quantity in (\ref{iin}) is bounded below by  $\Pc\(-(S^\Xb-K)^-\)$. 

  To conclude, we notice that  $\Yb$ standard, implies all $(X_i-Y_i)^-$ commonotonic, so that  $ \sum_{i=0}^N\Pc\(-(X_i-Y_i)^-\)=\Pc\(- \sum_{i=0}^N(X_i-Y_i)^-\)= \Pc\(-(S^\Xb-K)^-\)$. \end{proof}

We now return to the insurance problem and give some definitions for this framework.
\begin{defi}
 Consider the risks $(X_i)_{i=1}^N$ and some corresponding contracts $\{(\pi_i,Y_i)\}_{i=1}^N$ with initial capital $X_0=k_0$.
  We denote $\pi_0:=K-\sum_{i=1}^N\pi_i$ and $Y_0=K-\sum_{i=1}^NY_i$ .
  \begin{enumerate}
  \item The premia $(\pi_i)_{i=1}^N$  are said to be fair if $\pi_0=k_0$ and $\pib=(\pi_0,...,\pi_N)$ is a fair cost allocation for the risk vector $\Xb=(X_0,...,X_N)$.  
  \item   The contracts are  said to be fair if the premia $(\pi_i)_{i=1}^N$ are fair and, given $\pib=(\pi_0,\cdots, \pi_N)$, $\Yb=(Y_0,\cdots, Y_N)$ is a fair payoff.
    \end{enumerate}
\end{defi}

\section{Determining fair  contracts}\label{SecAD}

The aim of this section is to characterise some fair insurance contracts in presence of default risk. The main difficulty is the design of fair payoffs for the defaultable contracts. As a technique to tackle this problem, we propose in Subsection \ref{nononerlap}  a distinct cooperative game, where players are states of $\Omega$ and coalitions are elements of $\Fc$.  The interpretation is that states are competing to get the highest possible payoff under the threat to split from the grand coalition $\Omega$. Another interpretation  is that for each element $A\in \Fc$,  one can create an available contract  that can be entered at time 0 to cover the risk $\one_A$. That is,  Arrow-Debreu risks can be covered by insurance contracts\footnote{Such a market could be labelled as ``complete''; however this could  lead to some confusions, as we do not assume that individual agents can eliminate all risks, the contracts that they enter being defaultable.  Also, note that we do not assume that for an $A\in \Fc$ there is necessarily an agent $i\in\{1,..,N\}$ having a risk exposure of $\one_A$.  }
 and an agent can enter such a contract and even several such contracts, provided it does not exceed the agent's true risk exposure (by the principle of insurable interest). We will show that the fuzzy version of this game is in fact a generalisation of the game that we introduced in the previous section.  In Subsection \ref{sec:proofs}, we will prove that ensuring fairness in the ``Arrow-Debreu fuzzy game'' leads to fairness of the contracts in the frame of the initial  $N+1$ player game.

\subsection{Arrow-Debreu claims and the families of fair state payoffs}\label{nononerlap}
We denote 
$$
Z:=S^\Xb+k_0.
$$
Let us consider the following \textit{cost game} $(\Omega, \Fc, c^Z)$ with the characteristic function $c^Z:\Fc\to \Rr$ is defined as
$$
c^Z(A):=\Pc(Z\one_A),\quad A\in\Fc.
$$
Any element in $\Fc$ is now interpreted as a coalition, and $\Fc$ as the set of all coalitions. The cost $c^Z(A)=\Pc(Z\one_A)$ corresponds to the liability level of the insurance company,  relative to the set $A$.  By the principle of insurable interest, a coalition could not exceed this level of  exposure, that is the cumulative exposure of all agents.  But lower levels should be possible to reach by coalitions but they are currently excluded in this setting. 
For this reason, we introduce fuzzy coalitions, where  players $\omega\in\Omega$ ``choose" a rate of participation in a coalition instead of a binary decision yes/no:
\begin{defi}
 A fuzzy coalition is  a random variable  $\lambda:(\Omega,\Fc)\to [0,1]$. For  any fuzzy coalition $\lambda$, the corresponding cost function is $c^Z:[0,1]\to \Rb$ given as $c^Z(\lambda):=\Pc(Z\lambda)$.
 \end{defi} 

From an economical standpoint, an insurance company that commits to liability $Z$ implicitly  commits to any lower liability $0\leq \xi\leq Z$ as well. This level of risk corresponds to a fuzzy coalition $\lambda=\frac{\xi}{Z}$.  The fuzzy game 
$(\Omega, \Fc, c^Z)$ is a generalisation of the  finite-player game $(\Nc, 2^\Nc, c^\Xb)$ from the previous section. Indeed, the set of players $\Nc=\{0,...,N\}$ can be linked to  the following set of fuzzy coalitions $\{\hat \lambda_0,...,\hat \lambda_N\}$ with $\hat \lambda_i=\frac{X_i}{Z}$ so that the cost function of the finite-player game is obtained via the relation 
$$
c^\Xb(S)=c^Z\(\sum_{i\in S}\hat \lambda_i\)\text{, for any $S\in\Nc$. }
$$
 
\begin{prop}\label{convexcz} The set function $c^Z$ is 2-alternating, that is, it satisfies for all $A,B\in\Fc$:
$$
c^Z(A\cap B)+c^Z(A\cup B)\leq c^Z(A)+c^Z(B).
$$
This implies that the cost game $(\Omega, \Fc, c^Z)$ is superadditive.
\end{prop}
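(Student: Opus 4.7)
The plan is to leverage the Choquet representation of $\Pc$ on non-negative random variables, which is available because $\Pc$ is commonotonic and $\Sc$ is weakly compact. Specifically, from (\ref{Palter}), for any $\xi \in L^\infty_+$ we have
\[
\Pc(\xi) = \int_0^\infty w(\{\xi > a\})\, da,
\]
with $w(C) := \Pc(\one_C)$ satisfying the 2-alternating property (\ref{2alter}). This reduces the question about $c^Z$ to the known 2-alternation of $w$.

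The key observation is the distributive identity: since $Z \geq 0$, for every $A \in \Fc$ and every $a \geq 0$,
\[
\{Z\one_A > a\} = A \cap \{Z > a\},
\]
so $c^Z(A) = \int_0^\infty w(A \cap \{Z>a\})\, da$. Now fix $A, B \in \Fc$ and, for each $a \geq 0$, set $\tilde A_a := A \cap \{Z>a\}$ and $\tilde B_a := B \cap \{Z>a\}$. One checks immediately that $\tilde A_a \cup \tilde B_a = (A\cup B) \cap \{Z>a\}$ and $\tilde A_a \cap \tilde B_a = (A\cap B) \cap \{Z>a\}$. The 2-alternation of $w$ applied to $\tilde A_a, \tilde B_a$ then yields, pointwise in $a$,
\[
w\bigl((A\cap B) \cap \{Z>a\}\bigr) + w\bigl((A\cup B) \cap \{Z>a\}\bigr) \leq w(A \cap \{Z>a\}) + w(B \cap \{Z>a\}).
\]
Integrating this inequality over $a \in [0, \infty)$ and using the Choquet formula gives the desired inequality $c^Z(A\cap B) + c^Z(A \cup B) \leq c^Z(A) + c^Z(B)$.

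For the second assertion, note that $c^Z(\emptyset) = \Pc(0) = 0$. Applying 2-alternation to disjoint sets $A, B \in \Fc$ therefore yields $c^Z(A \cup B) \leq c^Z(A) + c^Z(B)$, which in terms of the dual value function $v^Z(\cdot) := c^Z(\Omega) - c^Z(\Omega \setminus \cdot)$ is precisely superadditivity: $v^Z(A \cup B) \geq v^Z(A) + v^Z(B)$ for disjoint $A, B$.

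There is no real obstacle: once the Choquet integral representation is invoked, the proof is just integration of a pointwise inequality. The only substantive step is the commutation $\{Z \one_A > a\} = A \cap \{Z > a\}$, which crucially uses $Z \geq 0$ and $a \geq 0$ so that the indicator $\one_A$ interacts cleanly with the super-level sets of $Z$.
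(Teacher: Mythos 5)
Your proof is correct and follows essentially the same route as the paper's: both invoke the Choquet representation (\ref{Palter}), reduce the inequality to the 2-alternating property of $w$ on level sets, and integrate over $a$. The only cosmetic difference is that the paper first proves the more general inequality $c^Z(\xi_1\vee\xi_2)+c^Z(\xi_1\wedge\xi_2)\leq c^Z(\xi_1)+c^Z(\xi_2)$ for arbitrary $\xi_1,\xi_2\in L^\infty_+$ and then specializes to indicators, whereas you work with indicators directly via the identity $\{Z\one_A>a\}=A\cap\{Z>a\}$.
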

\begin{proof}For any $\xi_1,\xi_2\in L_+^\infty$, it holds that $c^Z(\xi_1\vee\xi_2)+c^Z(\xi_1\wedge\xi_2)\leq c^Z(\xi_1)+c^Z(\xi_2)$. Indeed, using the representation of $\Pc$ in (\ref{Palter})-(\ref{2alter}):
\begin{align*}
c^Z(\xi_1\vee\xi_2)&+c^Z(\xi_1\wedge\xi_2)=\Pc(Z(\xi_1\vee\xi_2))+\Pc(Z(\xi_1\wedge\xi_2))\\
&=\Pc((Z\xi_1)\vee(Z\xi_2))+\Pc((Z\xi_1)\wedge (Z\xi_2))\\
&=\int_0^\infty w\((Z\xi_1)\vee(Z\xi_2)>a\)da+\int_0^\infty w\((Z\xi_1)\wedge(Z\xi_2)>a\)da\\
&=\int_0^\infty w\(\{Z\xi_1>a\}\cup \{Z\xi_2>a\}\)da+\int_0^\infty w\(\{Z\xi_1>a\}\cap \{Z\xi_2>a\}\)da\\
&\leq \int_0^\infty w\(Z\xi_1>a\)da+\int_0^\infty w\(Z\xi_2>a\)da\\
&\leq \Pc\(Z\xi_1\)+\Pc\(Z\xi_2\)=c^Z(\xi_1)+c^Z(\xi_2).
\end{align*}
By taking $\xi_1=\one_A$ and  $\xi_1=\one_B$, we find the claimed inequality.
\end{proof}

In this framework, we consider a family:
\begin{equation}\label{contracts}
\{(\pi(A),Y(A)) ,A\in \Fc\},
\end{equation}
where for every set $A\in\Fc$,  $\pi(A)\in \Rb_+$   is interpreted as a cost and $Y(A)$ is a random variable interpreted as a payoff, both corresponding to the risk exposure $Z\one_A$.  It remains to  define the notions of fair cost and fair payoffs in this setting.
The definitions below are a generalisation of the ones with finite number of coalitions $2^{N+1}$ in Section \ref{secFairness}.  

\begin{defi}
\begin{itemize}
\item[(i)] A fair cost allocation is  any element $\pi$ in the core $\Cc(c^Z)$ of the game,
$$ 
\Cc(c^Z)= \left \{\nu\in \ba(\Omega,\Fc,\Pr)\;|\;\nu(\Omega)=K=c^Z(\Omega)\text{ and } \nu(A)\leq c^Z(A),\;\forall A\in \Fc\right \}.
$$
\item[(ii)]The fuzzy core $\widetilde \Cc(c^Z)$ of the game is
$$ 
\widetilde \Cc(c^Z)= \left \{\nu\in \ba(\Omega,\Fc,\Pr)\;|\;\nu(\Omega)=K=c^Z(\Omega)\text{ and } \int \lambda d\nu\leq \Pc\(\lambda Z\),\;\forall \lambda:(\Omega,\Fc)\to [0,1]\right\}.
$$
A fair cost allocation in the sense of fuzzy games  is  any element $\pi$ in the fuzzy core of the game.
\end{itemize}
\end{defi}
The notation $\ba(\Omega,\Fc,\Pr)$ above stands for the space of  bounded, finitely additive measures absolutely continuous with respect to $\Pr$. It is easily seen that $\widetilde \Cc(c^Z)\subset \Cc(c^Z)$.  Also the weak compactness of $\Sc$ implies that for $A_n\downarrow \emptyset$, we have $c^Z(A_n)\downarrow 0$.  The core $\Cc(c^Z)$ is therefore a set of sigma-additive measures, in other words $\widetilde \Cc(c^Z)\subset \Cc(c^Z)\subset L^1_+$ and the two cores are weakly compact convex sets.

The core is the set of allocations which cannot be improved upon by any coalition.  We can ``shrink'' it by introducing the fuzzy core, that is the set of allocations which cannot be improved upon by any fuzzy coalition.
 The basic papers for this approach are Aubin \cite{Au}, Artzner and Ostroy \cite{AO} and Billera and Heath \cite{BH}.  
 As explained above, using the fuzzy game approach is interesting as it leads to a generalisation of the game in the previous section.  
 
Another  advantage of using the fuzzy game  approach is that the fuzzy core can be characterised and linked to some elements in the scenario set $\Sc$:
\begin{prop}\label{fuzzycore} The following hold:
\begin{align*}
\widetilde \Cc\(c^Z\)&=Z\cdot\nabla \Pc(Z)\subset \Cc\(c^Z\)\\
Z\cdot \Sc&\subset \Cc\(c^Z\)-L^1_+
\end{align*}
If $Z>0$ $\as$ then
\begin{equation*}
\widetilde \Cc\(c^Z\) = \Cc\(c^Z\)\text{ implies } \Pc(Z)=\max_{\Qr\in\Sc}\Er_\Qr[Z]=\min_{\Qr\in\Sc}\Er_\Qr[Z].
\end{equation*}
\end{prop}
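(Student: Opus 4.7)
The plan is to establish the three claims in order, with the first two resting on convex separation in $L^1$ and the third following as a direct corollary.

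For Part 1, $\widetilde\Cc(c^Z)=Z\cdot\nabla\Pc(Z)$: if $\Qr^*\in\nabla\Pc(Z)$, then $\nu:=Z\cdot\Qr^*$ has total mass $\Er_{\Qr^*}[Z]=\Pc(Z)=K$ and satisfies $\int\lambda\,d\nu=\Er_{\Qr^*}[\lambda Z]\le\Pc(\lambda Z)$ for every fuzzy coalition $\lambda$, so $\nu\in\widetilde\Cc(c^Z)$; the inclusion $\widetilde\Cc(c^Z)\subset\Cc(c^Z)$ follows by specialising $\lambda=\one_A$ in the defining inequality. For the reverse inclusion, take $\nu\in\widetilde\Cc(c^Z)$; positive homogeneity of $\Pc$ extends the defining inequality to all $\lambda\in L^\infty_+$,
\begin{equation*}
\int\lambda\,d\nu \;\le\; \Pc(\lambda Z) \;=\; \sup_{\Qr\in\Sc}\int\lambda\,d(Z\cdot\Qr).
\end{equation*}
The set $Z\cdot\Sc$ is convex and weakly compact in $L^1$ as the image of the weakly compact convex $\Sc$ under the weakly continuous linear map $q\mapsto Zq$, so $Z\cdot\Sc-L^1_+$ is weakly closed convex. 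If $\nu$ lay outside this set, Hahn--Banach would produce a separating $\lambda\in L^\infty$, necessarily $\ge 0$ $\Pr$-a.s.\ (otherwise the supremum in the $-L^1_+$ direction would be infinite), contradicting the display. Hence $\nu=Z\cdot\Qr-h$ with $\Qr\in\Sc$ and $h\in L^1_+$, and the identity $K=\nu(\Omega)=\Er_\Qr[Z]-\int h\,d\Pr\le\Pc(Z)=K$ forces $h=0$ and $\Qr\in\nabla\Pc(Z)$.

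For Part 2, $Z\cdot\Sc\subset\Cc(c^Z)-L^1_+$, I would run the same separation scheme: $\Cc(c^Z)$ is weakly compact, $\Cc(c^Z)-L^1_+$ is weakly closed convex, and any separating functional must be represented by some $\eta\in L^\infty_+$. The core--Choquet duality for the $2$-alternating game $c^Z$ (Proposition~\ref{convexcz}) gives
\begin{equation*}
\sup_{\nu\in\Cc(c^Z)}\int\eta\,d\nu \;=\; \int_0^\infty c^Z(\{\eta>t\})\,dt \;=\; \int_0^\infty\Pc(Z\one_{\{\eta>t\}})\,dt,
\end{equation*}
whereas the layer-cake decomposition together with the bound $\Er_\Qr\le\Pc$ on $L^\infty_+$ (valid for $\Qr\in\Sc$) yields $\Er_\Qr[\eta Z]=\int_0^\infty\Er_\Qr[Z\one_{\{\eta>t\}}]\,dt\le\int_0^\infty\Pc(Z\one_{\{\eta>t\}})\,dt$. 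So no separator exists and $Z\cdot\Qr\in\Cc(c^Z)-L^1_+$.

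Part 3 is then immediate. Assuming $Z>0$ $\as$ and $\widetilde\Cc(c^Z)=\Cc(c^Z)$, Part 1 identifies this common set with $Z\cdot\nabla\Pc(Z)$. For arbitrary $\Qr\in\Sc$, Part 2 furnishes $\Qr^*\in\nabla\Pc(Z)$ and $h\in L^1_+$ with $Z\cdot\Qr=Z\cdot\Qr^*-h$; dividing by the strictly positive $Z$ yields $d\Qr/d\Pr\le d\Qr^*/d\Pr$ $\Pr$-a.s., and since both densities integrate to $1$ they must coincide. Hence $\Qr=\Qr^*\in\nabla\Pc(Z)$ and $\Er_\Qr[Z]=\Pc(Z)$ for every $\Qr\in\Sc$, giving $\max_{\Qr\in\Sc}\Er_\Qr[Z]=\min_{\Qr\in\Sc}\Er_\Qr[Z]=\Pc(Z)$. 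The main obstacle is the reverse inclusion in Part 1: the full $[0,1]$-valued fuzzy-core inequality (not merely its indicator restriction) is needed to identify $\nu$ as $Z\cdot\Qr^*$ with $\Qr^*$ a genuine element of $\nabla\Pc(Z)$; weak compactness and convexity of $\Sc$ are essential both for closedness of $Z\cdot\Sc-L^1_+$ and for the residual-mass argument that pins $\Qr$ inside the subgradient set, while commonotonicity feeds into Part 2 through the $2$-alternating Choquet representation.
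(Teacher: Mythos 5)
Your proposal is correct and follows essentially the same route as the paper: the same identification $\widetilde\Cc(c^Z)=Z\cdot\nabla\Pc(Z)$ via the support-function bound $\nu(\xi)\le\Pc(\xi Z)$, the same Hahn--Banach separation against the weakly closed set $\Cc(c^Z)-L^1_+$ combined with the Choquet/core duality $\sup_{\nu\in\Cc(c^Z)}\nu(\eta)=\int_0^\infty c^Z(\eta>t)\,dt$ for Part 2, and the same division-by-$Z$ argument for Part 3. The only difference is that you make explicit the separation step that the paper leaves implicit in the reverse inclusion of Part 1 ("Hence $\nu\in Z\cdot\Sc$"), which is a faithful filling-in rather than a new idea.
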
 
\begin{proof}

We first show $\widetilde \Cc\(c^Z\)=Z\cdot\nabla \Pc(Z)$. By definition $\nabla \Pc(Z)=\Sc\cap\{\nu\mid \nu(\Omega)=K\}$. Take now $\nu\in Z\cdot \Sc$, then obviously $\nu(\xi)\le c^Z(\xi)$ for all $\xi\in L^\infty$. This shows that $Z\cdot\nabla \Pc(Z)\subset \widetilde \Cc\(c^Z\)$. Conversely, take $\nu\in \widetilde \Cc\(c^Z\)$ then  $\nu$ satisfies $\nu(\lambda)\leq \Pc(\lambda Z)$ for all $\lambda: (\Omega,\Fc)\to[0,1]$. This implies that $\nu$ satisfies $\nu(\xi)\leq \Pc(\xi Z)$ for all $\xi\in L^\infty$. Hence $\nu\in Z\cdot\Sc$. But then $\nu(\Omega)=K$ shows that $\nu\in Z\cdot\nabla \Pc(Z)$.

Let us now  prove that $Z\cdot \Sc\subset \Cc\(c^Z\)-L^1_+$. Take $\Qr\in\Sc$ and suppose that $\Qr\notin \Cc\(c^Z\)-L^1_+$. The set $\Cc\(c^Z\)$ is weakly compact and hence $ \Cc\(c^Z\)-L^1_+$ is a closed convex subset of $L^1$. The Hahn-Banach theorem then allows to find $\xi\in L^\infty$ such that
$$\Er_\Qr[\xi Z]>\sup\left\{\Er\[\xi\(\frac{d\nu}{d\Pr}-h\)\]\mid \nu \in \Cc\(c^Z\);h\in L^1_+\right\}.$$
This implies that $\xi\ge 0$ and hence $\sup\{\Er[\xi(d\nu/d\Pr-h))\mid \nu \in \Cc\(c^Z\);h\in L^1_+\}=\sup\{\nu(\xi)\mid \nu \in \Cc\(c^Z\)\}$. We get that
$$
\int_0^\infty \Er_\Qr[Z\one_{\{\xi>u\}}]\,du=\Er_\Qr[\xi Z]>\sup\{\nu(\xi)\mid \nu \in \Cc\(c^Z\)\}=\int_0^\infty c^Z(\xi>u)\,du,
$$
which is a contradiction since $\Er_\Qr[Z\one_{\{\xi>u\}}]\le c^Z(\xi>u)$ for all $u$ and all $\Qr\in\Sc$.
The inclusion can be read as follows. For all $\Qr\in \Sc$ there exists $\nu\in  \Cc\(c^Z\)$ so that $Z\cdot \frac{d\Qr}{d\Pr}\leq \frac{d\nu}{d\Pr}$.
 
If $\widetilde \Cc\(c^Z\)= \Cc\(c^Z\)$, then, for all  $\Qr\in \Sc$ there exists $\Qr_0\in  \nabla \Pc(Z)$ so that $Z\cdot \frac{d\Qr}{d\Pr}\leq Z\cdot \frac{d\Qr_0}{d\Pr}$.  If $Z>0\;\as$ then  $\frac{d\Qr}{d\Pr}\leq  \frac{d\Qr_0}{d\Pr}$ $\as$, that is, $\Qr=\Qr_0$ and hence $\Sc=\nabla \Pc(Z)$. 
\end{proof}
\begin{rem} The above proposition shows that the $\wedge,\vee$ inequality in the beginning of the proof of Proposition \ref{convexcz} is not equivalent to commonotonicity.  Indeed, the functional $c^Z$ is not commonotonic since it is different from $\sup\{\nu(\xi)\mid \nu \in \Cc\(c^Z\)\}$ which is the uniquely defined  commonotonic extension of $c^Z$ restricted to $\Fc$.
\end{rem}
We now define random payoffs $\{Y(A),A\in\Fc\}$ as introduced in  (\ref{contracts}) and study their fairness.
\begin{defi}\label{OptStateAlloc}A family of random variables $\{Y(A),A\in\Fc\}$ is called a \textit{family of state payoffs}  if $Y$ is a finitely additive vector measure on $\Omega$
\begin{align*}
Y: \Fc&\to L^\infty_+\\
A&\mapsto Y(A),
\end{align*}
that satisfies:
\begin{enumerate}
\item $Y(\Omega)=K$ \as.
\item For each $A\in \Fc$, $Y(A)$ is a random variable on $(\Omega,\sigma(Z, \one_A))$.
\end{enumerate}
$Y(A)$ shall be referred to as \textit{payoff for the coalition }$A$. 
\end{defi}
We make the choice that  payoffs $\{Y(A),A\in\Fc\}$ are measurable with respect to $\sigma(Z, \one_A)$ rather than $\Fc$. This is in order to have contracts that specify payoffs contingent on the risk realisation and do not include some other extraneous randomness. Here also, this condition is not needed from a mathematical point of view, but it is more realistic  from an economic point of view.  In the definition we only required that $Y$ is finitely additive.  But the interesting payoffs will become countably additive in the following sense.  If $(A_n)_n$ is a sequence of pairwise disjoint sets taken in $\Fc$, then
$$
Y(A_1)+Y(A_2)+...=Y(\cup_nA_n)=K.
$$
where the sum converges in probability (and not necessarily in $L^\infty$ norm).

We now define the admissible payoffs as those state payoffs that satisfy the following constraints\footnote{The existence of the equity  shifts the bankruptcy set from $\{Z>K\}$ (the default event in absence of equity, that is, when $S^{\Xb}=Z$) to $Z>K+k_0$ (that is, the default event with equity $k_0$). The existence of equity also introduces an asymmetry between players and the payments in default. We choose to ignore these aspects for now and fully focus on the distribution of the surplus. For this, we only model the surplus state $\{Z<K\}$ and its complement.  In the next subsection we take care of the precise effect of bankruptcy and emphasize  the positive effect of the bankruptcy procedures.  }:
\begin{defi}
A family of state payoffs $Y$ is admissible if $Y\in\Ar^{Z,\Fc}(K)$, where:
\begin{equation*}
\Ar^{Z,\Fc}(K)=\left \{Y \text{ state payoff }  \biggm | 
\begin{array}{l}
  A\subset\{Z\geq K\}\Rightarrow Y(A)\one_A=K\one_A\\
  A\subset\{Z< K\}\Rightarrow Y(A)\geq Z\one_A\\
 \end{array}
 \right  \}.
\end{equation*}
\end{defi}
\begin{lemma}\label{remgenad}Any state payoff $ Y=\{Y(A),A\in\Fc\}\in \Ar^{Z,\Fc}(K)$  has the representation:
\begin{align}\label{reprpay}
 Y(A)&=  \alpha(A)(K-Z)^++  (Z\wedge K)\one_{A},
\end{align}
where $\alpha(A)$ is an $\Fc$ measurable random variable with values in $[0,1]$. The mapping $\alpha:\Fc\to L^\infty_+$ is a finitely additive vector measure, normalised to 1, i.e. $\alpha(\Omega)=1$ \as.     We will refer to $ \alpha$ as the benefit sharing measure corresponding to $ Y$. 
\end{lemma}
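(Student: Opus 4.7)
The plan is to produce $\alpha$ by explicitly inverting the claimed representation. Concretely, I would set
$$
\alpha(A) := \frac{Y(A) - (Z\wedge K)\one_A}{(K-Z)^+}\one_{\{Z<K\}} + \one_A\one_{\{Z\ge K\}},
$$
where the fraction on $\{Z<K\}$ is well defined since $(K-Z)^+ = K-Z > 0$ there, and the choice of $\one_A$ on $\{Z\ge K\}$ is one of many valid extensions (it is chosen for convenience because it is already $[0,1]$-valued and additive in $A$).

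First I would pin down $Y(A)$ on each of the four pieces obtained by intersecting $A$ and $A^c$ with $\{Z\ge K\}$ and $\{Z<K\}$. Finite additivity of $Y$ gives $Y(A) = Y(A\cap\{Z\ge K\}) + Y(A\cap\{Z<K\})$, and applying the admissibility conditions to these two pieces (which are contained in $\{Z\ge K\}$ and $\{Z<K\}$ respectively), together with $Y \ge 0$, yields $Y(A) = K$ on $A\cap\{Z\ge K\}$ and $Y(A)\ge Z$ on $A\cap\{Z<K\}$. The same argument applied to $A^c$ gives $Y(A^c) = K$ on $A^c\cap\{Z\ge K\}$ and $Y(A^c)\ge Z$ on $A^c\cap\{Z<K\}$, and combining these with the total mass relation $Y(A)+Y(A^c) = K$ produces $Y(A) = 0$ on $A^c\cap\{Z\ge K\}$ and $Y(A)\le K-Z$ on $A^c\cap\{Z<K\}$.

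With these four bounds in hand, the numerator $Y(A) - (Z\wedge K)\one_A$ vanishes on $\{Z\ge K\}$ and takes values in $[0, K-Z]$ on $\{Z<K\}$, so $\alpha(A) \in [0,1]$ pointwise. Finite additivity of the vector measure $\alpha$ is then automatic: on $\{Z<K\}$ it follows from finite additivity of $Y$ and of $A\mapsto \one_A$, and on $\{Z\ge K\}$ it is just $\one_{A\cup B} = \one_A + \one_B$ for disjoint $A,B$. The normalization checks out because $\alpha(\Omega) = (K-Z)/(K-Z) = 1$ on $\{Z<K\}$ and $\alpha(\Omega) = \one_\Omega = 1$ on $\{Z\ge K\}$. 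Substituting $\alpha(A)$ back into the right-hand side of (\ref{reprpay}) recovers $Y(A)$ on every piece, proving the representation.

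The only mildly subtle step is deriving $Y(A) = 0$ on $A^c\cap\{Z\ge K\}$: this does not follow from admissibility applied to $A$ directly (which speaks only about the restriction of $Y(A)$ to $A$), but requires crossing between $A$ and $A^c$ through the constraints $Y(A)+Y(A^c) = K$ and $Y\ge 0$ combined with the admissibility of $A^c\cap\{Z\ge K\}\subset\{Z\ge K\}$. Everything else is routine bookkeeping.
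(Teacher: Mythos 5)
Your proof is correct and follows essentially the same route as the paper's: the decisive step in both is the ``crossing'' argument combining additivity, $Y(\Omega)=K$, $Y\ge 0$ and the admissibility of subsets of $\{Z\ge K\}$ to force $Y(A)=K\one_A$ on $\{Z\ge K\}$, after which the four-piece bookkeeping on $A,A^c$ intersected with $\{Z\ge K\},\{Z<K\}$ is identical. Your explicit formula for $\alpha(A)$ (with the harmless convention $\one_A$ on $\{Z\ge K\}$, where $(K-Z)^+=0$) is just a more concrete packaging of the $\alpha$ whose existence the paper asserts from the same decomposition.
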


\begin{proof} Take $Y\in \Ar^{Z,\Fc}(K)$. We claim that if $A\subset\{Z\geq K\}$ then $Y(A)\one_{\{Z\geq K\}}=K\one_A$ and if $B\subset\{Z < K\}$ then $Y(B)\one_{\{Z\geq K\}}=0$. Indeed, if $A\subset\{Z\geq K\}$, denoting $\widetilde A:=\{Z\geq K\}\setminus A$ and using the additivity, we observe that $Y(\{Z\geq K\})\one_{\{Z\geq K\}}=K\one_{\{Z\geq K\}}=\(Y(A)+Y(\widetilde A)\)\one_{\{Z\geq K\}}$. As $Y(A)\one_A=K\one_A$ and $Y(\widetilde A)\one_{\widetilde A}=K\one_{\widetilde A}$ by the definition of admissible payoffs, it follows that $Y(A)\one_{\widetilde A}=0$. If  $B\subset\{Z< K\}$ then we denote $\widetilde B:= \{Z\geq K\}\cup B$ and we obtain $Y(\widetilde B)\one_{\{Z\geq K\}}=\(Y(B)+Y(\{Z\geq K\})\)\one_{\{Z\geq K\}}\leq K\one_{\{Z\geq K\}}$. As $Y(\{Z\geq K\})\one_{\{Z\geq K\}}=K\one_{\{Z\geq K\}}$ and $Y$ is nonnegative, it follows that $Y(B)\one_{\{Z\geq K\}}=0$.  Hence both claims are proved. We obtain for $ A\subset\{Z\geq K\}$ and $B\subset\{Z< K\}$ the representations: 
\begin{align*}
Y(A) &= K\one_{A}+Y(A)\one_{\{Z< K\}}\\
Y(B)&= Z\one_{B}+\eta(B)\one_{\{Z< K\}}. 
\end{align*}
for some nonnegative random variable $\eta(B)$.

Starting from any set $E\in\Fc$ we can create a partition $A_1,A_2,B_1,B_2$ of $\Omega$ with:  $A_1=E\cap \{Z\geq  K\}$, $B_1=E\cap \{Z< K\}$, $A_2=E^c\cap \{Z\geq  K\}$ and $B_2=E^c\cap \{Z< K\}$. We have that  $Y(A_1)+Y(A_2)+Y(B_1)+Y(B_2)=K$ and we can use the decompositions of these payoffs determined above to deduce that $\(Y(A_1)+Y(A_2)+\eta(B_1)+\eta(B_2)\)\one_{\{Z< K\}}=(K-Z)\one_{\{Z< K\}}$. This ensures  the  existence of nonnegative random variables $\alpha(A_1), \alpha(A_2), \alpha(B_1),\alpha(B_2)$ satisfying $\sum_{i=1}^2 (\alpha(A_i)+\alpha(B_i))=1$ and such that . 
\begin{align*}
Y(A_i) &= K\one_{A_i}+\alpha(A_i)(K-Z)\one_{\{Z< K\}}\\
Y(B_i)&= Z\one_{B_i}+\alpha(B_i)(K-Z)\one_{\{Z< K\}}. 
\end{align*}
As $E=A_1\cup B_1$ with $A_1\cap B_1=\emptyset$, we have that $Y(E)=Y(A_1)+Y(B_1)$ and we obtain the representation (\ref{reprpay}) for $Y(E)$, with $\alpha(E)=\alpha(A_1)+\alpha(B_1)$. The finite additivity property of $\alpha$ is inherited from the one of $Y$. 
\end{proof}

\begin{defi} 
\begin{itemize}
\item[(a)] Given $\pi\in \Cc\(c^Z\)$, we say that a family of state payoffs $Y=\{Y(A),A\in\Fc\}$ is \textit{fair}, if
 for any $A\in \Fc$, the following inequality holds
\begin{equation}\label{optcond}
\Pc(Z\one_{A}-Y(A))+\pi(A)\leq \Pc(Z\one_A).
\end{equation}

\item[(b)] Given  $\pi\in\widetilde \Cc\(c^Z\)$, we say that a family of state payoffs $Y=\{Y(A),A\in\Fc\}$ is \textit{fair in the sense of fuzzy games}, if
 for any $\lambda:(\Omega,\Fc)\to  [0,1]$, the following inequality holds
\begin{equation}\label{optcond}
\Pc(\lambda Z-Y(\lambda))+\pi(\lambda)\leq \Pc(\lambda Z),
\end{equation}
where $\pi(\lambda)=\int \lambda (\omega)\pi(d\omega)$ and 
$$
Y(\lambda)=\int \lambda (\omega)Y(d\omega)
$$ is the integral\footnote{ The reader who is not familiar with  integration with respect to a finitely additive vector measure can easily find out that the definition for elementary functions $\lambda$ can be extended by using the $L^\infty$ density.} of $\lambda$ using the vector measure $A\mapsto Y(A)$.
\end{itemize}

\end{defi}

\begin{rem}\label{rempref}We can still interpret fairness of payoffs using preference orders on the space of bounded random variables. Indeed, Definition \ref{def:preference} can be  adapted to the setting where  coalitions are elements of $\Fc$.  We introduce for each coalition a preference order on $L^\infty(\Omega,\Fc,\Pr)$:  coalition $A\in \Fc$ prefers $\xi$ to $\eta$, and we write $\eta \prec_A\xi$ if and only if:
$$
\Pc\(Z\one_A-\xi \)<\Pc\(Z\one_A-\eta\).
$$
We denote weak preference  by  $\preceq_A$, and  indifference by $\sim_A$.
Then, $Y$ is a family of fair payoffs if  
$$\pi(A)\preceq_A  Y(A)\quad\forall A\in\Fc.$$
\end{rem}
We will direct our attention to the case of  those fair state payoffs that are also Pareto optimal for all possible finite partitions of $\Omega$. These are called maximal payoffs and are defined as follows:

\begin{defi}\label{defMaxY}  A family of  state payoffs $Y=\{Y(A),A\in\Fc\}$ is called  maximal if for any finite sequence of mutually disjoint sets $(A_n)$ satisfying $A_n\in \Fc$  and $\cup A_n=\Omega$ we have that $\{Y(A_n)\}_{n\geq 1}$ is a solution of the following cost minimisation problem:
\begin{equation}\label{maxY}
 \inf_{Y'\in\Ar^{Z,\Fc}(K)}\sum_n\Pc\(Z\one_{A_n}-Y'(A_n)\).
\end{equation}
\end{defi}
Maximal payoffs  are such that the costs of residual risks cannot be reduced even by reallocations of the risks in different portfolios (or coalitions), where a reallocation of risks corresponds to a partition $(A_n)$ of $\Omega$.

The following result gathers some properties of maximal payoffs:

\begin{theo}\label{theomax} 
\begin{itemize}
\item[(a)] Let $\alpha:(\Omega,\Fc)\to [0,1]$ be a finitely additive probability measure.  Then, the family of  state payoffs $Y$ associated to $\alpha$ via the relations (\ref{reprpay}) 
is maximal.

\item[(b)] If $\widetilde Y$ is  maximal, there is a finitely additive probability measure $\alpha:(\Omega,\Fc)\to [0,1]$ so that 
$$
 Y(A)\sim_A \widetilde Y(A),\quad \forall A\in\Fc,
$$
where the payoff $Y$ is  associated to $\alpha$ via (\ref{reprpay}).
\item[(c)] If $\widetilde Y$ is  maximal, then for   $\Qr^*\in\nabla \Pc(Z)$:
\begin{equation*}
\Pc\(-\(Z\one_A-\widetilde Y(A)\)^-\)=-\Er_{\Qr^*}\[\(Z\one_A-\widetilde Y(A)\)^-\], \forall A\in\Fc
\end{equation*}

\end{itemize}
\end{theo}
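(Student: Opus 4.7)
The plan is to reduce all three parts to a single decomposition coming from Lemma \ref{remgenad}. For any admissible $Y'\in\Ar^{Z,\Fc}(K)$ with representing vector measure $\alpha'$, I would compute
\begin{equation*}
Z\one_A - Y'(A) = (Z-K)^+\one_A - \alpha'(A)(K-Z)^+,
\end{equation*}
whose two summands have disjoint supports (contained in $\{Z>K\}$ and $\{Z<K\}$ respectively) and opposite signs, hence are commonotonic. Commonotonicity of $\Pc$ then gives
\begin{equation*}
\Pc\(Z\one_A - Y'(A)\) = \Pc\((Z-K)^+\one_A\) + \Pc\(-\alpha'(A)(K-Z)^+\).
\end{equation*}
Summed over any finite partition $(A_n)$ of $\Omega$, the first term is independent of $Y'$, so minimising \eqref{maxY} reduces to minimising $\sum_n \Pc(-\alpha'(A_n)(K-Z)^+)$.

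I would then fix $\Qr^*\in\nabla\Pc(Z)$; by Lemma \ref{lemsim1} with $\xi=Z$ and $m=K$ this $\Qr^*$ also lies in $\nabla\Pc(-(K-Z)^+)$, so $\Pc(-(K-Z)^+)=\Er_{\Qr^*}[-(K-Z)^+]$. Using $\Pc(\cdot)\geq \Er_{\Qr^*}[\cdot]$ term by term together with the vector-measure additivity $\sum_n \alpha'(A_n)=1$ a.s.\ and the linearity of $\Er_{\Qr^*}$ produces the universal lower bound
\begin{equation*}
\sum_n \Pc(-\alpha'(A_n)(K-Z)^+) \geq \Er_{\Qr^*}[-(K-Z)^+] = \Pc(-(K-Z)^+).
\end{equation*}
For part (a), when $\alpha$ is a scalar-valued probability measure, positive homogeneity of the coherent $\Pc$ gives $\Pc(-\alpha(A_n)(K-Z)^+) = \alpha(A_n)\Pc(-(K-Z)^+)$, and summing over the partition recovers the lower bound exactly; hence the associated $Y$ is maximal for every partition.

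For part (c), I would apply maximality of $\widetilde Y$ to the two-element partition $\{A,A^c\}$: the equality $\Pc(-\widetilde\alpha(A)(K-Z)^+)+\Pc(-\widetilde\alpha(A^c)(K-Z)^+)=\Pc(-(K-Z)^+)=\Er_{\Qr^*}[-(K-Z)^+]$, combined with the individual estimates $\Pc(\cdot)\ge \Er_{\Qr^*}[\cdot]$, forces equality in each term. Since $(Z\one_A-\widetilde Y(A))^-=\widetilde\alpha(A)(K-Z)^+$, this is exactly (c). For part (b), I would set
\begin{equation*}
\alpha(A) := \frac{\Er_{\Qr^*}[\widetilde\alpha(A)(K-Z)^+]}{\Er_{\Qr^*}[(K-Z)^+]}
\end{equation*}
when the denominator is positive (any scalar probability measure works in the degenerate case $\Er_{\Qr^*}[(K-Z)^+]=0$, since all relevant valuations then vanish). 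Finite additivity, normalisation, and $\alpha(A)\in[0,1]$ follow from the corresponding properties of $\widetilde\alpha$ and monotonicity of $\Er_{\Qr^*}$. With $Y(A)=\alpha(A)(K-Z)^++(Z\wedge K)\one_A$, positive homogeneity yields $\Pc(Z\one_A-Y(A))=\Pc((Z-K)^+\one_A)+\alpha(A)\Pc(-(K-Z)^+)$, which by the choice of $\alpha$ and part (c) equals $\Pc(Z\one_A-\widetilde Y(A))$; hence $Y(A)\sim_A\widetilde Y(A)$.

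The main obstacle I anticipate is the universal lower bound: because the coefficients $\alpha'(A_n)$ are $\Fc$-measurable $[0,1]$-valued random variables rather than scalars, positive homogeneity of $\Pc$ cannot be applied directly. The trick is to work not with a subgradient at each individual $-\alpha'(A_n)(K-Z)^+$ but with one single subgradient $\Qr^*\in\nabla\Pc(Z)$, transferred onto $-(K-Z)^+$ via Lemma \ref{lemsim1}, so that linearity of $\Er_{\Qr^*}$ can interact with the vector-measure additivity of $\alpha'$ across the partition; once this is in place, all three parts fall out of the same computation and a careful choice of partition.
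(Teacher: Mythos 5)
Your proof is correct and follows essentially the same route as the paper: the decomposition of Lemma \ref{remgenad} plus commonotonicity reduces everything to minimising $\sum_n\Pc\(-\alpha'(A_n)(K-Z)^+\)$, scalar measures attain the bound by positive homogeneity, and a subgradient $\Qr^*\in\nabla\Pc(Z)$ transferred onto $-(K-Z)^+$ via Lemma \ref{lemsim1} forces the termwise equalities in (c) and the identification of $\alpha$ in (b). The only (harmless) variation is that you derive the universal lower bound $\Pc\(-(K-Z)^+\)$ from that single subgradient together with additivity of $\alpha'$, where the paper uses subadditivity of $\Pc$; your formula for $\alpha$ in (b) agrees with the paper's by part (c), and your explicit treatment of the degenerate case $\Er_{\Qr^*}[(K-Z)^+]=0$ is a welcome extra precaution.
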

\begin{proof}
\begin{itemize}
\item[(a)] By Lemma \ref{remgenad}, a state payoff $\widetilde Y=\{\widetilde Y(A),A\in\Fc\}\in \Ar^{Z,\Fc}(K)$ writes:
\begin{align*}
\widetilde Y(A)&= \widetilde \alpha(A)(K-Z)^++  (Z\wedge K)\one_{A},
\end{align*}
where $\widetilde \alpha(A)$ is a  vector measure, i.e., the benefit sharing measure.  Using commonotonicity of $\Pc$ we obtain:
\begin{align*}
\Pc(Z\one_A-\widetilde Y(A))&=\Pc\(- \widetilde \alpha(A)(K-Z)^++(Z-K)^+\one_{A}\)\\
&=\Pc\(- \widetilde \alpha(A)(K-Z)^+\)+\Pc\((Z-K)^+\one_{A}\)
\end{align*}
and therefore the infimum in (\ref{maxY}) is obtained by a family of state payoffs $Y$ with its corresponding benefit sharing measure $ \alpha$ being a solution of
\begin{align}\label{interp}
 \inf_{\widetilde \alpha}\sum_k\Pc\(- \widetilde \alpha(A_k)(K-Z)^+\),
\end{align}
(the infimum being taken over the class of all vector measures on $\Fc$), and this for any  $(A_k)$, finite partition of $\Omega$. We observe that 
\begin{equation*}
\sum_k\Pc\(- \widetilde \alpha(A_k)(K-Z)^+\)\geq \Pc\(-  \sum_k\widetilde \alpha(A_k)(K-Z)^+\)= \Pc\(- (K-Z)^+\).
\end{equation*}
In case $(\widetilde\alpha(A_1),\widetilde\alpha(A_2),\cdots,\widetilde\alpha(A_n))$ is non random, we can use  homogeneity of $\Pc$ to show that we get equality above instead of an inequality, so that the infimum is attained by this vector. Therefore any finitely additive probability measure $\alpha$ on $\Fc$ is a solution of (\ref{interp}).  Furthermore the infimum in (\ref{maxY}) equals
$$
\Pc\(- (K-Z)^+\) + \sum_k \Pc\((Z-K)^+\one_{A_k}\).
$$

\item[(b)] In general, for a solution $\widetilde \alpha$ of (\ref{interp}), we denote
$$
\alpha(A):=\frac{\Pc\(- \widetilde \alpha(A)(K-Z)^+\)}{ \Pc\(- (K-Z)^+\)},\;\forall A\in\Fc.
$$
Then, $\alpha$ is a  finitely additive measure. Indeed it is nonnegative,  $\alpha(\Omega)=1$ and   finite  additivity is also verified:  for any $(A_k)$, a finite partition of $\Omega$, the random variables 
$$
\widetilde \xi_k:=- \widetilde \alpha(A_k)(K-Z)^+
$$
 satisfy --- as $\widetilde \alpha$ is a solution of  (\ref{interp}) --- 
\begin{equation}\label{intmax}
 \sum_k\Pc\(\widetilde\xi_k\)=\Pc\( \sum_k\widetilde\xi_k\)= \Pc\(- (K-Z)^+\).
 \end{equation}
 We see that the property in (b) holds  when $\alpha$ is as above: $Y(A)\sim_A \widetilde Y(A),\;\forall A\in\Fc$, as one can easily verify.

 \item[(c)] Let $\widetilde  \alpha$ be the benefit sharing measure associated with $\widetilde  Y$, so that we have for all $A\in \Fc$: $(Z\one_A-\widetilde Y(A))^-=\widetilde  \alpha(A)(K-Z)^+$.  Above, we  have seen that whenever $\widetilde  Y$ maximal, the equality (\ref{intmax}) holds, $\forall (A_k)k\text{ finite partition of }\Omega$.
From Lemma \ref{lemsim1}, for any  $\Qr^*\in\nabla \Pc(Z)$: 
$$
 \Pc\(- (K-Z)^+\)= \Er_{\Qr^*}\[- (K-Z)^+\]=- \sum_k\Er_{\Qr^*}\[\widetilde  \alpha(A_k)(K-Z)^+\].
$$ Using these equalities, we obtain
 $$
\sum_n\(\Pc\(- \alpha(A_n)(K-Z)^+\)+\Er_{\Qr^*}\[\alpha(A_n)(K-Z)^+\]\)= 0.
$$ which proves the statement, as the sum only contains nonnegative terms.
\end{itemize}
\end{proof}

\begin{theo}\label{reprfairstate} 
Let  $\Qr^*\in \nabla \Pc(Z)$ and $\pi^*:=Z\cdot \Qr^*\in\widetilde \Cc\(c^Z\)$.
 We consider state payoffs $Y^*=\{Y^*(A),A\in\Fc\}$:
\begin{align}\label{YA}
Y^*(A)&= \alpha^*(A)(K-Z)^++ (Z\wedge K)\one_{A},
\end{align}with $\alpha^*\ll \Pr$ satisfying 
\begin{equation}\label{alphaA}
\alpha^*=\frac{(Z-K)^+}{\Er_{\Qr^*}\[(Z-K)^+\]} \cdot \Qr^*.
\end{equation}
Then the following hold:
\begin{itemize}
\item[(a)] Given $\pi^*$, the family  $Y^*$ is fair. 
\item[(b)] 

Assume $\widetilde Y(A),A\in\Fc$ is  a family of state payoffs that is maximal and fair given $\pi^*$, in the sense of fuzzy games. Then there exists $\widehat \Qr\in\nabla \Pc(Z\wedge K)$ so that 
$$
\Pc(Z\one_A-\widetilde Y(A))= \Pc(Z\one_A-\widehat Y(A))\;\forall A\in\Fc,
$$
that is, (using preference relations in Remark \ref{rempref})
$$
\forall A\in\Fc\quad \widetilde Y(A)\sim_A \widehat Y(A),
$$
where $\widehat Y$ is defined as
\begin{align}\label{YAhat}
 \widehat Y(A)&=  \widehat \alpha(A)(K-Z)^++ (Z\wedge K)\one_{A},
\end{align}with
\begin{equation}\label{alphaAhat}
 \widehat \alpha(A):=\frac{\pi^*(A)-\Er_{\widehat \Qr}[(Z\wedge K)\one_{A}] }{\Pc\((Z-K)^+\)}.
\end{equation}
\end{itemize}
\end{theo}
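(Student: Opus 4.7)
The plan for Part (a) is a direct verification of $\Pc(Z\one_A - Y^*(A)) + \pi^*(A) \leq \Pc(Z\one_A)$. I would first write $Z\one_A - Y^*(A) = (Z-K)^+\one_A - \alpha^*(A)(K-Z)^+$; the two summands are supported on the disjoint sets $A\cap\{Z>K\}$ and $\{Z\leq K\}$ and are therefore commonotonic. Commonotonic additivity of $\Pc$ combined with Lemma \ref{lemsim1} (which gives $\Qr^*\in\nabla\Pc((Z-K)^+)\cap\nabla\Pc(-(K-Z)^+)$) rewrites the left-hand side as $\Pc((Z-K)^+\one_A) - \alpha^*(A)\Er_{\Qr^*}[(K-Z)^+] + \pi^*(A)$. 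The identity $\Pc(Z)=K=\Er_{\Qr^*}[Z]$, valid by translation invariance since $Z=S^{\Xb}+k_0$, then forces $\Er_{\Qr^*}[(K-Z)^+] = \Er_{\Qr^*}[(Z-K)^+] = \Pc((Z-K)^+) =: L$. Formula (\ref{alphaA}) yields $\alpha^*(A)L = \Er_{\Qr^*}[(Z-K)^+\one_A]$, and since $\pi^*(A) = \Er_{\Qr^*}[Z\one_A]$ the expression collapses to $\Pc((Z-K)^+\one_A) + \Er_{\Qr^*}[(Z\wedge K)\one_A]$, which is at most $\Pc((Z-K)^+\one_A) + \Pc((Z\wedge K)\one_A) = \Pc(Z\one_A)$ by $\Qr^*\in\Sc$ and commonotonic additivity.

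For Part (b), the plan is to introduce an auxiliary scalar measure, recognise it as an element of the fuzzy core $\widetilde\Cc(c^{Z\wedge K})$, and invoke Proposition \ref{fuzzycore}. Let $\widetilde\alpha:\Fc\to L^\infty_+$ be the benefit sharing vector measure of $\widetilde Y$ (Lemma \ref{remgenad}) and set $\nu(A) := \Er_{\Qr^*}[\widetilde\alpha(A)(K-Z)^+]$, a scalar nonnegative finitely additive measure on $\Fc$ with $\nu(\Omega) = L$. Define $\mu := \pi^* - \nu$. For any $\lambda:(\Omega,\Fc)\to[0,1]$ I would write $\lambda Z - \widetilde Y(\lambda) = \lambda(Z-K)^+ - \widetilde\alpha(\lambda)(K-Z)^+$ (again with disjoint supports), apply commonotonic additivity of $\Pc$, and use only the inequality $\Pc(-\widetilde\alpha(\lambda)(K-Z)^+) \geq -\Er_{\Qr^*}[\widetilde\alpha(\lambda)(K-Z)^+] = -\nu(\lambda)$ that comes from $\Qr^*\in\Sc$. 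Plugging into the fuzzy-fairness inequality, the $\Pc(\lambda(Z-K)^+)$ terms cancel and one is left with $\mu(\lambda) \leq \Pc(\lambda(Z\wedge K))$. Together with $\mu(\Omega) = K-L = \Pc(Z\wedge K) = c^{Z\wedge K}(\Omega)$, and with nonnegativity $\mu\geq 0$ obtained by testing $\lambda=\one_{A^c}$ and using commonotonicity of the pair $(Z\wedge K)\one_A, (Z\wedge K)\one_{A^c}$, this is exactly $\mu\in\widetilde\Cc(c^{Z\wedge K})$.

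Proposition \ref{fuzzycore} then identifies $\widetilde\Cc(c^{Z\wedge K})$ with $(Z\wedge K)\cdot\nabla\Pc(Z\wedge K)$, delivering a $\widehat\Qr\in\nabla\Pc(Z\wedge K)$ with $\mu = (Z\wedge K)\widehat\Qr$; specialising to $\lambda=\one_A$ gives $\pi^*(A) - \Er_{\widehat\Qr}[(Z\wedge K)\one_A] = \nu(A)$, which is exactly (\ref{alphaAhat}) after dividing by $L$, so $\widehat\alpha(A) L = \nu(A)$. The equivalence $\widetilde Y(A)\sim_A\widehat Y(A)$ is then obtained by invoking Theorem \ref{theomax}(c) to upgrade the previous inequality into an equality $\Pc(-\widetilde\alpha(A)(K-Z)^+) = -\nu(A)$ at the set level: this yields $\Pc(Z\one_A - \widetilde Y(A)) = \Pc((Z-K)^+\one_A) - \nu(A)$, matched on the $\widehat Y$ side by positive homogeneity with $\widehat\alpha(A)L = \nu(A)$. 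The main obstacle is the correct identification of $\mu$: one must recognise that the \emph{inequality} form of the linearisation (sufficient to place $\mu$ in the fuzzy core) is available for arbitrary fuzzy coalitions using only $\Qr^*\in\Sc$, and reserve the strong set-wise maximality identity (Theorem \ref{theomax}(c)) for the final indifference step, after which the Hahn--Banach-style duality of Proposition \ref{fuzzycore} closes the argument.
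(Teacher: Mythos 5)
Your proposal is correct and follows essentially the same route as the paper: part (a) is the same commonotonic decomposition plus Lemma \ref{lemsim1} reducing fairness to $\Er_{\Qr^*}[(Z\wedge K)\one_A]\leq \Pc((Z\wedge K)\one_A)$, and in part (b) your measure $\mu=\pi^*-\nu$ is exactly the paper's $\delta$, identified as an element of the fuzzy core of $c^{Z\wedge K}$ and resolved via Proposition \ref{fuzzycore}. The only (minor) organizational difference is that the paper first reduces to a scalar benefit-sharing measure via Theorem \ref{theomax}(b) and works with equalities, whereas you keep $\widetilde\alpha$ general, use only the subgradient inequality from $\Qr^*\in\Sc$ to place $\mu$ in the fuzzy core, and defer the maximality identity (Theorem \ref{theomax}(c)) to the final indifference step --- a tidy but not substantively different arrangement.
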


\begin{proof}[Proof of Theorem \ref{reprfairstate}]
As $\alpha^*$ is a probability measure,  from Theorem \ref{theomax} (a), the family  $Y^*$ is admissible and maximal. 

Using  (\ref{qstarZplus}) in (\ref{alphaA}) we obtain that 
\begin{equation}
\alpha^*(A):=\frac{\Er_{\Qr^*}[(Z-K)^+\one_{A}]}{\Pc\((Z-K)^+\)}.
\end{equation}
By the commonotonicity and positive homogeneity of $\Pc$ and the expressions in Lemma \ref{lemsim1}:
 \begin{align*}
\Pc(Z\one_{A}-Y^*(A))&=\Pc(-(Z\one_{A}-Y^*(A))^-)+\Pc((Z\one_{A}-Y^*(A))^+) \\
&=\alpha^*(A)\Pc(-(K-Z)^+)+\Pc((Z-K)^+\one_{A})\\
&=-\alpha^*(A)\Er_{\Qr^*}[(Z-K)^+]+\Pc((Z-K)^+\one_{A})
\end{align*}
and
 \begin{align*}
\Pc(Z\one_A)&=\Pc\((Z\wedge K)\one_A\)+\Pc\((Z- K)^+\one_A\).
\end{align*}

Therefore, the inequality defining  fairness
$$
\Pc(Z\one_{A}-Y^*(A))+\pi^*(A)\leq \Pc(Z\one_A),\quad \forall A\in\Fc,
$$
writes:
\begin{equation}\label{interma}
\Er_{\Qr^*}[Z\one_A] - \alpha^*(A)\Er_{\Qr^*}[ (K-Z)^+]\leq \Pc\[(Z\wedge K)\one_A\],\quad \forall A\in\Fc
\end{equation}
that, after replacing $\alpha^*(A)$ with its expression, is equivalent to:
\begin{align*}
\Er_{\Qr^*}\[(Z\wedge K)\one_A\]\leq  \Pc\[(Z\wedge K)\one_A\] ,\quad \forall A\in\Fc.
\end{align*}
This is always satisfied, so that (a) is proved.

The claim in (b) is proved as follows.  First, one can check that $\widehat Y$ is fair given $\pi^*$, by replacing in the inequality (\ref{interma}) $\alpha^*$ with $\widehat \alpha$. 

Let  $\widetilde{Y}$ be maximal and fair. From Theorem \ref{theomax}  (b), it is sufficient to consider the case where its corresponding benefit sharing measure  $\widetilde \alpha$ is a finitely additive measure.  In this case, it is not difficult to check that $\widetilde Y(A)\sim_A \widehat Y(A)$ if and only if $\widetilde \alpha(A)=\widehat \alpha(A)$.

Therefore, it is necessary and sufficient to show that whenever $\widetilde Y$ satisfies: (i)  is fair in the sense of fuzzy games, (ii)  is maximal and (iii) its corresponding benefit sharing measure  $\widetilde \alpha(A)$  is a  finitely additive measure,  then there is $\widehat \Qr\in\nabla \Pc(Z\wedge K)$ so that $\widetilde \alpha(A)=\widehat \alpha(A)$.

Following similar steps as in the proof of (a) we find that the inequality 
\begin{equation}\label{intermf}
\Pc(\lambda Z-\widetilde  Y(\lambda))+\pi(\lambda)\leq \Pc(\lambda Z),\quad \forall \lambda \in L^\infty,\;0\leq \lambda\leq 1
\end{equation}
that characterises fair payoffs in the fuzzy game sense, is equivalent to
$$
\Er_{\Qr^*}[\lambda Z -\widetilde \alpha(\lambda) (K-Z)^+]\leq \Pc\[\lambda (Z\wedge K)\] ,\quad \forall \lambda \in L^\infty,\;0\leq \lambda\leq 1.
$$
We define the family $(\delta(A))$ with 
$$\delta(A):=\Er_{\Qr^*}[Z\one_A -\widetilde \alpha(A) (K-Z)^+].$$
 It can be verified that this is a finitely additive measure of total mass  $\Pc\(Z\wedge K\)$ and the inequality (\ref{intermf}) is equivalent to:
\begin{equation*}
\int\lambda d \delta\leq \Pc\[\lambda (Z\wedge K)\] ,\quad \forall \lambda \in L^\infty,\;0\leq \lambda\leq 1.
\end{equation*}

 In other words, $\delta$ is an element of the set
$$
 \left\{\nu\in \ba(\Omega,\Fc,\Pr) \;|\;\nu(\Omega)= \Pc\(Z\wedge K\), \int\lambda d \nu\leq \Pc\[\lambda (Z\wedge K)\] , \forall \lambda \in L^\infty\right \}.
$$
We obtain that there exists  $\widehat \Qr\in\nabla \Pc(X\wedge K)$, such that $\delta(A)= \Er_{\widehat \Qr}[(Z\wedge K)\one_A]$. Implicitly,  we find an expression for $\widetilde \alpha$:
\begin{align*}
\widetilde \alpha(A)&=\frac{\Er_{\Qr^*}[Z\one_A]- \Er_{\widehat \Qr}[(Z\wedge K)\one_A]}{\Er_{\Qr^*}[ (K-Z)^+]},
\end{align*}
that is exactly $\widehat \alpha$.
\end{proof}

\subsection{Fair contracts in the $N+1$ players game}\label{sec:proofs}
We consider the framework of Section \ref{secFairness}.
The next theorem gives the form of some fair insurance contracts. 
\begin{theo}\label{propFairStandard}
Suppose that $\Qr^*\in\nabla \Pc(S^\Xb)$ and 
$$
\pi_i=\Er_{\Qr^*}[X_i], \text{ for }i\in\{1,..,N\}. 
$$ Consider the   contracts  $\{(\pi_i,Y_i)\}_{i=1}^N$ with standard payoffs (i.e., the payoffs are as in (\ref{formYi})) such that the constant proportions $(\alpha_i)$ given by:
\begin{align}\label{alphai}
\alpha_i&=
\frac{\Er_{\Qr^*}\[ \frac{X_i}{S^{\Xb}} (S^{\Xb}-K)^+\]}{\Pc\[ (S^{\Xb}-k)^+\]}\text{ for }i\in\{1,..,N\}.
\end{align}
These contracts are fair and maximal. Furthermore, for the shareholders:
$$
\Pc(X_0-Y_0)=0
$$
and 
$$
k_0=\Er_{\Qr^*}[Y_0].
$$
\end{theo}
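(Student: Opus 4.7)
The plan is to establish the four conclusions in order: $\Er_{\Qr^*}[Y_i]=\pi_i$ for $i\geq 1$, then $\Pc(X_0-Y_0)=0$ together with $k_0=\Er_{\Qr^*}[Y_0]$, then maximality, and finally fairness.

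For the arithmetic identities I would first decompose the standard payoff as $Y_i-X_i=\alpha_i(k-S^\Xb)^+ - (X_i/S^\Xb)(S^\Xb-K)^+$ and integrate against $\Qr^*$. The identity $\Er_{\Qr^*}[(k-S^\Xb)^+]=\Pc((S^\Xb-k)^+)$, which follows from $\Er_{\Qr^*}[S^\Xb]=k$ together with Lemma \ref{lemsim1} applied with $\xi=S^\Xb$, $m=k$, combined with the definition of $\alpha_i$, yields $\Er_{\Qr^*}[Y_i]=\pi_i$; summing over $i\geq 1$ then gives $\Er_{\Qr^*}[Y_0]=K-k=k_0$. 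For $\Pc(X_0-Y_0)=0$ I would write $X_0-Y_0 = -\alpha_0(k-S^\Xb)^+ + (S^\Xb-k)^+\wedge k_0$, which is a sum of two commonotonic pieces (disjoint supports, opposite signs); commonotonic additivity and positive homogeneity of $\Pc$ reduce the claim to the identity $\alpha_0\,\Pc((S^\Xb-k)^+) = \Pc((S^\Xb-k)^+\wedge k_0)$, which itself follows from the commonotonic decomposition $(S^\Xb-k)^+=(S^\Xb-K)^+ + (S^\Xb-k)^+\wedge k_0$ and the computation $\sum_{i=1}^N\alpha_i\,\Pc((S^\Xb-k)^+) = \Er_{\Qr^*}[(S^\Xb-K)^+]=\Pc((S^\Xb-K)^+)$ (using Lemma \ref{lemsim1} again for the last equality).

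Maximality is immediate from Lemma \ref{lemmaximal} since the payoffs are standard. For fairness, fix a coalition $S\subseteq\Nc$ and set $S'=S\setminus\{0\}$, $\bar X_{S'}=\sum_{i\in S'}X_i$, $\bar\alpha_S=\sum_{i\in S}\alpha_i$. A direct case-by-case computation on the disjoint events $\{S^\Xb\leq k\}$, $\{k<S^\Xb\leq K\}$, $\{S^\Xb>K\}$ gives
\begin{equation*}
\sum_{i\in S}(X_i-Y_i) = -\bar\alpha_S(k-S^\Xb)^+ + \one_{\{0\in S\}}\bigl[(S^\Xb-k)^+\wedge k_0\bigr] + (\bar X_{S'}/S^\Xb)(S^\Xb-K)^+,
\end{equation*}
and the same case analysis verifies that the three summands are pairwise commonotonic. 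Applying $\Pc$ termwise and collapsing via the identities of the preceding paragraph produces $\Pc\bigl(\sum_{i\in S}(X_i-Y_i)\bigr) = \Pc(f) - \Er_{\Qr^*}[f]$ with $f:=(\bar X_{S'}/S^\Xb)(S^\Xb-K)^+$. Since $\sum_{i\in S}\pi_i = \Er_{\Qr^*}[\bar X_{S'}]+\one_{\{0\in S\}}k_0$ and $\Pc\bigl(\sum_{i\in S}X_i\bigr)=\Pc(\bar X_{S'})+\one_{\{0\in S\}}k_0$ by translation invariance of $\Pc$, the fairness inequality reduces to the single estimate
\begin{equation*}
\Pc(f) - \Er_{\Qr^*}[f]\ \leq\ \Pc(\bar X_{S'}) - \Er_{\Qr^*}[\bar X_{S'}].\qquad(\star)
\end{equation*}

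The main obstacle is $(\star)$. Both sides are nonnegative and $f\leq\bar X_{S'}$, yet the $\Pc$-versus-$\Er_{\Qr^*}$ gap is not monotone under inclusion of level sets in general, so $(\star)$ has to exploit the special form of $f$ together with the assumption $\Qr^*\in\nabla\Pc(S^\Xb)$. My plan is to derive $(\star)$ from the fuzzy state-game fairness of Theorem \ref{reprfairstate}(a): taking $\lambda=\bar X_{S'}/Z$ and unwinding $\Pc(\lambda Z - Y^*(\lambda))+\Er_{\Qr^*}[\lambda Z]\leq\Pc(\lambda Z)$ with the same commonotonic decomposition used above yields the analogous estimate with $f$ replaced by $\widetilde f:=(\bar X_{S'}/Z)(S^\Xb-k)^+\geq f$. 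Bridging from $\widetilde f$ down to $f$ --- reconciling the Arrow--Debreu default event $\{S^\Xb>k\}$ with the true default event $\{S^\Xb>K\}$ via the equity buffer $k_0$ --- using the Choquet representation \eqref{Palter}--\eqref{2alter} together with $\Qr^*\in\nabla\Pc((S^\Xb-K)^+)\cap\nabla\Pc(S^\Xb\wedge K)$ from Lemma \ref{lemsim1} is the delicate step where I expect the principal work to lie.
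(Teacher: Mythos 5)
Your setup and reductions are sound and in fact track the paper's own argument closely: the paper likewise obtains maximality from Lemma \ref{lemmaximal}, invokes Theorem \ref{reprfairstate}(a) with $\lambda=\sum_{i\in S}X_i/Z$ to get the fuzzy-game estimate, and is then left with exactly your bridging problem, namely showing that $\Pc(f)-\Er_{\Qr^*}[f]\le\Pc(\widetilde f)-\Er_{\Qr^*}[\widetilde f]$, where $f=(\bar X_{S'}/S^{\Xb})(S^{\Xb}-K)^+$ and $\widetilde f=(\bar X_{S'}/Z)(S^{\Xb}-k)^+$; this is the paper's inequality (\ref{ineqint}), with $\eta(S)=f$ and $\xi(S)=\widetilde f$. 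But you stop there: you name this the ``delicate step where the principal work lies'' and only gesture at the Choquet representation and Lemma \ref{lemsim1} as possible tools. That is a genuine gap, not a routine verification --- as you yourself observe, $f\le\widetilde f$ alone does not imply the inequality, and neither of the tools you cite closes it by itself.

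The paper's resolution is short but it is precisely the one idea your proposal is missing: set $\gamma(S):=\widetilde f-f\ge 0$ and check, by the explicit case analysis on $\{S^{\Xb}\in[k,K]\}$ and $\{S^{\Xb}>K\}$ written out in the paper, that $\gamma(S)$ is commonotonic with $f$. Commonotonic additivity then gives $\Pc(\widetilde f)=\Pc(\gamma(S))+\Pc(f)$, while linearity gives $\Er_{\Qr^*}[\widetilde f]=\Er_{\Qr^*}[\gamma(S)]+\Er_{\Qr^*}[f]$; since $\Qr^*\in\Sc$ forces $\Pc(\gamma(S))\ge\Er_{\Qr^*}[\gamma(S)]$, the gap $\Pc-\Er_{\Qr^*}$ is superadditive along this decomposition and the bridge follows in one line. (The paper then handles $0\in S$ by noting that $X_0-Y_0$ is commonotonic with $\sum_{i\in S}(X_i-Y_i)$, which your three-term decomposition accomplishes equivalently, modulo checking that pairwise commonotonicity of your summands really yields three-term additivity of $\Pc$.) Your preliminary identities --- $\Er_{\Qr^*}[(k-S^{\Xb})^+]=\Pc((S^{\Xb}-k)^+)$, the shareholder claims $\Pc(X_0-Y_0)=0$ and $k_0=\Er_{\Qr^*}[Y_0]$, and the reduction of fairness to $(\star)$ --- are correct and consistent with the paper; only this last commonotone-splitting argument is absent, and without it the proof is incomplete.
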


\begin{rem}We can associate with $S^{\Xb}$ a probability measure $\alpha^*:(\Omega, \Fc)\to[0,1]$:
\begin{equation*}
\alpha^*:=\frac{(S^{\Xb}-k)^+}{\Er_{\Qr^*}\[ (S^{\Xb}-k)^+\]}\;d\Qr^*
\end{equation*}
Then, one can verify that 
$$\alpha_i=\Er_{\alpha^*}\[\frac{X_i-Y_i}{\sum_{i=0}^N(X_i-Y_i)}\];\;i=0,1,...,N.$$
As the probability $\alpha^*$ assigns the entire mass to the default event, we can interpret the fair benefit share of agent $i$, $\alpha_i$  as being the expected value under $\alpha^*$ of the proportion of loss in default of agent $i$ over the total loss in default of all agents. 
\end{rem}

\begin{rem}
There is no uniqueness of the fair contracts.  The fair premia in Theorem \ref{propFairStandard} are known as an allocation in the fuzzy core of the game $c^\Xb$. Even if we consider a specific premium vector $\pi$ that is fixed, there is no uniqueness of the fair payoffs given $\boldsymbol \pi$. In Theorem \ref{reprfairstate}, the general forms of fair payoffs  given $\boldsymbol \pi$ was derived for the fuzzy game $c^Z$.
\end{rem}

\begin{proof}

The fact that $\boldsymbol \pi\in\Cc(c^\Xb)$, where $\pi_i=\Er_{\Qr^*}[X_i]$ for all $i\in\Nc$, is already a known result and trivial to verify directly. 

 By Lemma \ref{lemmaximal}, $\Yb$ are maximal. 
We now prove that $\Yb$ are fair payoffs given $\boldsymbol \pi$. As in the previous subsection, we use the notation
$$Z=S^{\Xb}+k_0
$$ 
and $$\lambda_i=\frac{X_i}{Z}\text { or } X_i=\lambda_iZ,\quad \text{ for }i=0,...,N.$$
 We use the notation from Subsection \ref{nononerlap}, in particular we take $Y^*$ to be the state payoff family defined  in Theorem \ref{reprfairstate}.  We have
 \begin{align*}
Y^*(\lambda_i)&=\int \lambda_i(\omega')Y^*(d\omega')=\Er_{\alpha^*}[\lambda_i](K-Z)^++\lambda_i(Z\wedge K)\\
&=\frac{\Er_{\Qr^*}[(Z-K)^+\lambda_i]}{\Pc\((Z-K)^+\)}(K-Z)^++(Z\wedge K)\lambda_i. 
\end{align*} 
By Theorem \ref{reprfairstate}, $Y^*=\{Y^*(A),A\in\Fc\}\in \Ar^{Z,\Fc}(K)$  is a family of state payoffs that are fair in the sense of fuzzy games, that is:
$$
\Pc(\lambda Z-Y^*(\lambda))+\pi^*(\lambda)\leq \Pc(\lambda Z),\quad\forall \lambda \in L^\infty,\;0\leq \lambda\leq 1.
$$
We recall that above $\pi^*=Z\cdot \Qr^*$ (see Theorem \ref{reprfairstate}). We can take in the above inequality $\lambda=\lambda(S)$, where $\lambda(S):=\sum_{i\in S}\lambda_i=\sum_{i\in S}\frac{X_i}{Z}$ for $S\subset \Nc$. We obtain that
\begin{equation*}
\Pc\(\sum_{i\in S}(X_i -Y^*(\lambda_i))\)+\Er_{\Qr^*}\[\sum_{i\in S}X_i\] \leq \Pc\(\sum_{i\in S}X_i\),\quad \forall S\in\Nc.
\end{equation*}
The inequalities above write:
\begin{equation}\label{intermineq}
\Pc\(\sum_{i\in S}(X_i -Y^*(\lambda_i))\)+\sum_{i\in S}\pi_i \leq c^\Xb(S),\quad \forall S\in\Nc.
\end{equation}
We now treat separately the case with and without equity.
\begin{itemize}
\item[(i)]  Let us assume that $X_0=k_0=0$, so that we have no equity and $K=k$, $Z=S^\Xb$. In this particular case: 
\begin{align*}
Y^*(\lambda_i)&=\frac{\Er_{\Qr^*}[(S^\Xb-K)^+\lambda_i]}{\Pc\((S^\Xb-K)^+\)}(K-S^\Xb)^++(S^\Xb\wedge K)\lambda_i\\
&=\alpha_i (K-S^{\Xb})^++(S^{\Xb}\wedge K)\frac{X_i}{S^\Xb}\\
&= \[X_i+\alpha_i\(k-S^{\Xb}\)\]\one_{\{S^{\Xb}\leq k\}}+X_i\(\frac{K}{S^{\Xb}}\wedge 1\)\one_{\{S^{\Xb}> k\}}
\end{align*}
that is exactly expression (\ref{formYi}) with $\alpha_i$ as in Theorem \ref{propFairStandard}. This proves that 
$$
Y_i=Y^*(\lambda_i)\in\Ar^\Xb(K).
$$
Then, the fact that $\Yb=(Y_0,...,Y_N)$ is a fair payoff for the risk $\Xb=(X_0,...,X_N)$ follows from the fact that $\boldsymbol \pi=\(0,\pi_1,...,\pi_N\) \in \Cc(c^\Xb)$ and the above. Indeed   (\ref{intermineq}) writes 
$$
c^{\Xb-\Yb}(S)+\sum_{i\in S}\pi_i \leq c^\Xb(S),\quad \forall S\in\Nc.
$$

\item[(ii)] Suppose $k_0>0$ that is,  $K=k+k_0>k$. The presence of equity  introduces an asymmetry between players, as equityholders have lower priority with no payment in case of default. In addition, the presence of equity shrinks the default event from $\{Z>K\}$ to $\{Z>K+k_0\}=\{S^\Xb>K\}$. In this situation, $Y^*(\lambda_i)\neq Y_i$ and even $Y^*(\lambda_i)\notin \Ar^\Xb(K)$. 
 In view of  (\ref{intermineq}), to prove fairness of $\Yb$ it is sufficient to show
\begin{equation}\label{ineqint}
c^{\Xb-\Yb}(S)\leq \Pc\(\sum_{i\in S}(X_i -Y^*(\lambda_i))\),\quad \forall S\in\Nc.
\end{equation}

We observe that for all $\Yb\in\Ar^\Xb(K)$ we have: 
\begin{equation*}
\{\exists i \in \Nc: X_i-Y_i>0\}\subset\cap_{i\in\Nc}\{X_j-Y_j\geq 0\}=\{S^{\Xb}\geq  k\}=\{Z\geq K\}
\end{equation*}
and  also
$$
\{\exists i\in\Nc: X_i-Y_i<0 \}\subset\cap_{i\in\Nc}\{X_j-Y_j\leq 0\}=\{S^{\Xb}\leq  k\} =\{Z\leq K\}.
$$
 Same expressions hold true if we replace $Y_i$ with $Y^*(\lambda_i)$. 
 
 Therefore, for any $\Yb\in\Ar^\Xb(K)$, $\Yb$ vector of standard payoffs, we have 
$$
\(\sum_{i\in S}(X_i-Y_i)\)^+=\sum_{i\in S}(X_i-Y_i)^+
$$ 
and 
$$
\(\sum_{i\in S}(X_i-Y_i)\)^-=\sum_{i\in S}(X_i-Y_i)^-=\sum_{i\in S}\alpha_i (K-Z)^+
$$
(the last equality comes from the definition of the standard payoffs).  

Let us consider $S\subset\Nc\setminus\{0\}$. Using the commonotonicity of $\Pc$ we have  
\begin{align*}
c^{\Xb-\Yb}(S)&=\Pc \(\sum_{i\in S}(X_i-Y_i)\) \\
&= \Pc\(\sum_{i\in S}(X_i-Y_i)^+\)+\Pc\(-\sum_{i\in S}(X_i-Y_i)^-\)\\
&= \Pc\(\sum_{i\in S}(X_i-Y_i)^+\)+\sum_{i\in S}\alpha_i \Pc\(-(K-Z)^+\) \\
&= \Pc\(\sum_{i\in S}(X_i-Y_i)^+\)-\sum_{i\in S}\alpha_i \Pc\((Z-K)^+\)\\
&= \Pc\(\sum_{i\in S}\frac{X_i}{S^\Xb}(S^\Xb-K)^+\)-\Er_{\Qr^*}\[\sum_{i\in S}\frac{X_i}{S^\Xb}(S^\Xb-K)^+\]\\
&=\Pc(\eta(S))-\Er_{\Qr^*}[\eta(S)].
\end{align*}
For simplicity, we have denoted above $\eta(S):=\sum_{i\in S}\frac{X_i}{S^\Xb}(S^\Xb-K)^+$. Also we have used the expressions for $\alpha_i$ that are given in Theorem \ref{propFairStandard}. Similarily
\begin{align*}
\Pc\(\sum_{i\in S}(X_i -Y^*(\lambda_i))\)&=\Pc\(\lambda(S)(Z-K)^+\)-\sum_{i\in S}\Er_{\alpha^*}[\lambda_i]\Pc\((Z-K)^+\)\\
&= \Pc\(\sum_{i\in S}\frac{X_i}{Z}(Z-K)^+\)-\Er_{\Qr^*}\[\sum_{i\in S}\frac{X_i}{Z}(Z-K)^+\]\\
&=\Pc(\xi(S))-\Er_{\Qr^*}[\xi(S)].
\end{align*}
We recall $\lambda_i=X_i/Z$. Also we denoted $\xi(S):= \sum_{i\in S}\frac{X_i}{Z}(Z-K)^+$.

 We notice that for any $S\subset\Nc\setminus \{0\}$ we have that the random variable  $\gamma(S): =\xi(S)-\eta(S)$ is commonotonic with $\eta(S)$. Indeed, 
 $$
 \gamma(S) =\sum_{i\in S}X_i\left\{\( \frac{K}{S^\Xb}-\frac{K}{S^\Xb+k_0}\)\one_{S^\Xb>K}+\(1-\frac{K}{S^\Xb+k_0}\)\one_{S^\Xb\in[k,K]}\right \}
 $$
 and 
 $$
 \eta(S)=\sum_{i\in S}X_i \(1-\frac{K}{S^\Xb}\)\one_{S^\Xb>K}.
 $$
Therefore:
\begin{align*}
\Pc\(\sum_{i\in S}(X_i -Y^*(\lambda_i))\)&=\Pc(\gamma(S))+\Pc(\eta(S))-\Er_{\Qr^*}[\gamma(S)+\eta(S)]= \\
&=\(\Pc(\gamma(S))-\Er_{\Qr^*}[\gamma(S)]\)+c^{\Xb-\Yb}(S)\geq c^{\Xb-\Yb}(S).\end{align*}
Thus we have proved (\ref{ineqint}) for all $S\subset\Nc\setminus \{0\}$ and it remains to prove the same holds when we include the element $\{0\}$.

We notice that $\pi_0=k_0=\Pc(X_0)$ and furthermore, for all  $S\subset\Nc\setminus \{0\}$, the random variable $X_0-Y_0$ is commonotonic with $\sum_{i\in S}(X_i-Y_i)$ which verifies that (\ref{ineqint})  holds for $S\cup\{0\}$ whenever it holds for  $S$.
\end{itemize}
\end{proof}

\section{Conclusion on the positive role of equity and bankruptcy procedures}
In this paper we used cooperative game theory in order to determine the fair part of agents in the surplus of a company. The analysis also provides an insight on the role of equity and the positive economic effects of the bankruptcy rules: by introducing a lower priority in default for shareholders as compared to the insured,  the impact of default of all players is reduced.  Inequality (\ref{ineqint}) is the mathematical transcription  of this fact. 
The reason behind is that in standard contracts, the shareholder's net position $X_0-Y_0$ is commonotonic with the total risk $S^\Xb$. But also, $X_0-Y_0$ is commonotonic with a net position $X_i-Y_i$ of  any insured agent. This property  is achieved due to the bankruptcy rules which prevent the shareholder to receive payments in default.  If shareholders were to have the same priority in default as insured agents, this would lead to a break in the commonotonicity property above mentioned and a destruction of value for all agents. Indeed, an increase of the residual risk of all agents in equilibrium occurs, as measured by the right hand term of (\ref{ineqint}).


\begin{thebibliography}{100}

\bibitem{ADEH1} Artzner, Ph., F. Delbaen, J.-M. Eber, and D. Heath: {\em Thinking Coherently},  RISK, November 97, 68--71, (1997).

\bibitem{ADEH2} Artzner, Ph., F. Delbaen, J.-M. Eber, and D. Heath:  {\em Coherent Risk Measures}, Mathematical Finance {\bf 9}, 145--175, (1999).

\bibitem{AO} Artzner, P. and Ostroy: {\em Gradients, subgradients and economic equilibria}, Adv. in Appl. Math., {\bf 4},  pp. 245--259 (1984).

\bibitem{Au} Aubin,, J.P. : {\em C\oe urs et \'equilibres des jeux flous \`a paiements lat\'eraux}, C.R.Acad.Sci. Paris S A, 279, pp. 891--894, (1974). 

\bibitem{BarrElK05}  Barrieu, P. and El Karoui, N.: {\em Inf-convolution of risk measures and optimal risk transfer. } Finance and Stochastics 9, 269--298 (2005).

\bibitem{BarrElK05a} Barrieu, P. and  El Karoui, N.: {\em Pricing, hedging and optimally designing derivatives via minimization of risk measures.} R. Carmona (eds.): Volume on Indifference Pricing. Princeton University Press (2005).

\bibitem{BH} Billera, L.J. and Heath, C.C.:  {\em Allocation of costs: a set of axioms yielding a unique procedure}, Mathematics of Operations Research 1, pp. 32--39, (1982).

 \bibitem{BurRus06} Burgert, C. and R\"uschendorf, L.: {\em On the optimal risk allocation problem.} Statistics \& Decisions 24, 153--171 (2006).
 
\bibitem{Borch62}  Borch, K.: {\em Equilibrium in a reinsurance market.} Econometrica {\bf 30}, 424--444 (1962).

\bibitem{D1} Delbaen, F.:  {\em Convex Games and Extreme Points},  Journ. Math.
Anal. Appli., {\bf 45}, pp. 210--233, (1974).

\bibitem{Pisa} Delbaen, F:   {\em Coherent Risk Measures}, Lectures given at the
Cattedra Galileiana at the Scuola Normale Superiore di Pisa, March 2000,   Published by the
{\em Scuola Normale Superiore di Pisa}, (2002).

\bibitem{Del} Delbaen, F.: {\em Coherent Risk Measures on General Probability Spaces} in Advances in Finance and Stochastics, pp.  1--37, Springer, Berlin (2002).


\bibitem{FDbook} Delbaen, F:   {\em Monetary Utility Functions}, Lectures held in 2008 and published in the series ``Lecture Notes of the University of Osaka",    (2011).

\bibitem{Denn1} Denneberg, D.: Verzerrte Wahrscheinlichkeiten in der
Versicherungsmathematik, quantilabh\"angige Pr\"amienprinzipen. {\em 
Mathematik-Arbeitspapiere}, {\bf 34}, Universit\"at Bremen, (1989).

\bibitem{DG} Deprez, O. and  Gerber, H.U:   {\em On convex principles of premium  calculation.}  Insurance: Mathematics and Economics, pp. 179--189, (1985).

\bibitem{HeaKu04} Heath, D. and Ku, H.: {\em  Pareto equilibria with coherent measures of risk.} Mathematical Finance 14, 163-172 (2004).

\bibitem{FilKup08} Filipovi\'c, D. and Kupper, M.:  {\em  Equilibrium Prices For Monetary Utility Functions}, International Journal of Theoretical and Applied Finance (IJTAF), World Scientific Publishing Co. Pte. Ltd., vol. {\bf 11}(03), pages 325--343 (2008).

\bibitem{JouSchTou08}  Jouini, E., Schachermayer, W., and Touzi, N.: {\em Optimal Risk Sharing for Law Invariant Monetary Utility Functions.} Mathematical Finance, 18, 269--292. (2008).

\bibitem{Schm1} Schmeidler, D.:   {\em Cores of Convex Games.}  J. Math. Anal. Appli., {\bf 40},  pp. 214--225, (1972).


\bibitem{Schm} Schmeidler, D.: {\em  Integral Representation without Additivity}, 
Proc. Amer. Math. Soc., {\bf 97}, 255--261,  (1986).

\bibitem{Shapley} Shapley, L.:  {\em Notes on N-person games, VII: Cores of convex games. }Rand Memorandum, RM 4571 PR, Rand Corp, Santa Monica, Calif., (1965).

\bibitem{Yaa} Yaari, M. E.: {\em The dual theory of choice under risk}. Econometrica,
{\bf 55}, 95--115, (1987).


\end{thebibliography}
\end{document}